\newtheorem{theorem}{Theorem}
\newtheorem{acknowledgement}[theorem]{Acknowledgement}
\newtheorem{definition}[theorem]{Definition}
\newtheorem{proposition}[theorem]{Proposition}
\newtheorem{remark}[theorem]{Remark}
\newenvironment{proof}[1][Proof]{\noindent\textbf{#1.} }{\ \rule{0.5em}{0.5em}}
\begin{document}

\title{Biquaternions for analytic and numerical solution of equations of electrodynamics}
\author{Kira V. Khmelnytskaya, Vladislav V. Kravchenko\\Department of Mathematics\\CINVESTAV del IPN, Queretaro\\Libramiento Norponiente No. 2000\\Fracc. Real de Juriquilla\\Queretaro, Qro.\\C.P. 76230\\MEXICO\\e-mail: vkravchenko@qro.cinvestav.mx}
\date{12/02/2007}
\maketitle

\begin{abstract}
We give an overview of recent advances in analysis of equations of
electrodynamics with the aid of biquaternionic technique. We discuss both
models with constant and variable coefficients, integral representations of
solutions, a numerical method based on biquaternionic fundamental solutions
for solving standard electromagnetic scattering problems, relations between
different operators of mathematical physics including the Schr\"{o}dinger, the
Maxwell system, the conductivity equation and others leading to a deeper
understanding of physics and mathematical properties of the equations.

\end{abstract}

\section{Introduction}

Application of the algebra of biquaternions to equations of electromagnetism
has been subject of an important number of research articles and books (see,
e.g., \cite{Baylis}, \cite{BKT}, \cite{GS2}, \cite{GurseyTze}, \cite{Imaeda},
\cite{AQA}, \cite{KSbook}, \cite{LanczosPhD}, \cite{MM}, \cite{SprossigCubo},
\cite{SprossigFinland}, \cite{Tanisli}\ and many others). The aim of this work
is to present some recent results in the field concerning the usage of
algebraic advantages of biquaternions for analytic and numerical solution of
Maxwell's system for chiral media as well as for inhomogeneous media. Compared
to a considerable number of publications dealing with biquaternionic
reformulations of Maxwell's equations for a vacuum or for a homogeneous
isotropic medium, application of biquaternions to electromagnetic models
corresponding to more complicated media (a much more challenging object for
studying) was discussed in relatively few sources (\cite{GKK}, \cite{KKO},
\cite{KKR}, \cite{AQA}, \cite{KOZAA},\ \cite{SprossigCubo},
\cite{SprossigFinland}). Meanwhile the possibility of representation of
Maxwell's system for a vacuum in the form of a single biquaternionic equation
is known since 1919 \cite{LanczosPhD}, only recently it became clear how this
result can be generalized for inhomogeneous media \cite{KrISAAC}, \cite{AQA}
and for chiral media \cite{GKK}. An appropriate quaternionic or biquaternionic
reformulation of a first order system of mathematical physics opens the way
for applying different methods which in many aspects preserve the algebraic
power of complex analysis. For example, it is not easy to arrive at the Cauchy
integral formula for holomorphic functions using two-component vector
formalism or even more difficult task to develop using this formalism a
holomorphic function into a Taylor series. No mathematician \ would consider
such way of presenting complex function theory helpful or appropriate.
However, this is precisely what is happening in the study of three or
four-dimensional models of mathematical physics. Compare, e.g., the
Stratton-Chu integrals written in their standard form (see, e.g., \cite{CK1})
with their biquaternionic representation \cite{KrDoklady}, \cite{AQA},
\cite{KSbook} which is in fact a convolution of a biquaternionic fundamental
solution of the Maxwell operator with the electromagnetic field and it is
quite evident that the latter is natural and elucidating. The meaning of the
Stratton-Chu formulas as a Cauchy integral formula for the electromagnetic
field becomes transparent and no doubt in their biquaternionic form the
Stratton-Chu formulas could be included in a moderately advanced course of
electromagnetic theory which is not the usual case up to now in spite of their
central role in electrical engineering applications.

The results presented in this work are \textquotedblleft essentially
quaternionic\textquotedblright\ in the sense that it is not clear how they
could be obtained with other techniques. In the first part (sections 2-5) we
explain a numerical method for solving electromagnetic scattering problems
with an unusual for three-dimensional models precision by the aid of
biquaternionic fundamental solutions which to the difference of the usually
utilized matrix fundamental solutions for the Maxwell equations (see, e.g.,
\cite{Alexbook} and \cite{Eremin}) enjoy some advantageous properties. First
of all they are not matrices but vectors (of four components). Second, they
have a clear physical meaning of fields generated by point sources. Third,
their singularity is lower than that of fundamental solutions based on a
matrix approach. The main idea of this work is to explain how our approach
works and how it can be used. We only formulate some necessary results like
those about the completeness of our systems of quaternionic fundamental
solutions in appropriate functional spaces referring the interested reader to
some previous publications, in particular \cite{KKR} where the corresponding
proofs can be found.

In sections 6-8 we consider the time-dependent Maxwell system for chiral
media, rewrite it in a biquaternionic form as a single equation and then
construct explicitly a corresponding Green function. Section 9 is dedicated to
the time-dependent Maxwell equations for inhomogeneous media. We show that
these equations can also be written as a single biquaternionic equation. In a
static case the corresponding quaternionic operator factorizes the stationary
Schr\"{o}dinger operator. We study relationship between solutions of these
important physical equations.

\section{Biquaternionic fundamental solutions}

Let $\mathbb{H}(\mathbb{C})$ denote the set of complex quaternions (=
biquaternions). Each element $a$ of $\mathbb{H}(\mathbb{C})$ is represented in
the form $a=\sum_{k=0}^{3}a_{k}i_{k}$ where $\{a_{k}\}\subset\mathbb{C}$,
$i_{0}$ is the unit and $\{i_{k}|\quad k=1,2,3\}$ are the quaternionic
imaginary units:%

\[
i_{0}^{2}=i_{0}=-i_{k}^{2};\;i_{0}i_{k}=i_{k}i_{0}=i_{k},\quad k=1,2,3;
\]

\[
i_{1}i_{2}=-i_{2}i_{1}=i_{3};\;i_{2}i_{3}=-i_{3}i_{2}=i_{1};\;i_{3}%
i_{1}=-i_{1}i_{3}=i_{2}.
\]
We denote the imaginary unit in $\mathbb{C}$ by $i$ as usual. By
definition\ \ $i$\ commutes with \ $i_{k}$, $k=\overline{0,3}$.

We will use the vector representation of complex quaternions, every
$a\in\mathbb{H}(\mathbb{C})$ is represented as follows $a=a_{0}%
+\overrightarrow{a}$, where $a_{0}$ is the scalar part of $a$:
$\operatorname*{Sc}(a)=a_{0}$, and $\overrightarrow{a}$ is the vector part of
$a$: $\operatorname*{Vec}(a)=\overrightarrow{a}$ $=\sum_{k=1}^{3}a_{k}i_{k}$.
Complex quaternions of the form $a=\overrightarrow{a}$ are called purely
vectorial and can be identified with vectors from $\mathbb{C}^{3}$. The
operator of quaternionic conjugation we denote by $C_{H}$: $\overline{a}%
=C_{H}a=a_{0}-\overrightarrow{a}$.

Let us introduce the operator $D=\sum_{k=1}^{3}i_{k}\partial_{k}$, where
$\partial_{k}=\frac{\partial}{\partial x_{k}}$, whose action on quaternion
valued functions can be represented in a vector form as follows%

\[
Df=-\operatorname*{div}\overrightarrow{f}+\operatorname*{grad}f_{0}%
+\operatorname*{rot}\overrightarrow{f}.
\]
That is, $\operatorname*{Sc}(Df)=-\operatorname*{div}\overrightarrow{f}$ and
$\operatorname*{Vec}(Df)=\operatorname*{grad}f_{0}+\operatorname*{rot}%
\overrightarrow{f}$.

Denote $D_{\alpha}=D+\alpha$, where $\alpha$ is a complex constant. We have
the following factorization of the Helmholtz operator \cite{GuZAA}:
\begin{equation}
\Delta+\alpha^{2}=-D_{\alpha}D_{-\alpha}=-D_{-\alpha}D_{\alpha}.
\label{factor}%
\end{equation}
Using the fundamental solution of the Helmholtz operator%
\[
\theta_{\alpha}(x)=-\frac{e^{i\alpha\left\vert x\right\vert }}{4\pi\left\vert
x\right\vert }%
\]
(we suppose that $\operatorname*{Im}\alpha\geq0$), the fundamental solutions
$\mathcal{K}_{\alpha}$ and $\mathcal{K}_{-\alpha}$ for the operators
$D_{\alpha}$ and $D_{-\alpha}$ can be obtained from (\ref{factor}) in the
following way%

\begin{equation}
\mathcal{K}_{\alpha}=-(D-\alpha)\theta_{\alpha}\qquad\text{and\qquad
}\mathcal{K}_{-\alpha}=-(D+\alpha)\theta_{\alpha}. \label{fund}%
\end{equation}
We have%
\[
D_{\pm\alpha}\mathcal{K}_{\pm\alpha}=\delta,
\]
where $\delta$ is the Dirac delta function.

From (\ref{fund}) we obtain the explicit form of $\mathcal{K}_{\alpha}$ and
$\mathcal{K}_{-\alpha}$:%
\begin{equation}
\mathcal{K}_{\pm\alpha}(x)=(\pm\alpha+\frac{x}{\left\vert x\right\vert ^{2}%
}-i\alpha\frac{x}{\left\vert x\right\vert })\theta_{\alpha}(x). \label{fund_d}%
\end{equation}
Here $x=\sum_{k=1}^{3}x_{k}i_{k}$. Note that $\mathcal{K}_{\alpha}$ and
$\mathcal{K}_{-\alpha}$ are full biquaternions with $\operatorname*{Sc}%
(\mathcal{K}_{\pm\alpha}(x))=\pm\alpha\theta_{\alpha}(x)$ and
$\operatorname*{Vec}(\mathcal{K}_{\pm\alpha}(x))=-\operatorname*{grad}%
\theta_{\alpha}(x)=(\frac{x}{\left\vert x\right\vert ^{2}}-i\alpha\frac
{x}{\left\vert x\right\vert })\theta_{\alpha}(x)$.

More information on the algebra of biquaternions and related calculus can be
found in \cite{AQA}.

\section{Biquaternionic reformulation of Maxwell's equations in chiral media}

The operators $D_{\alpha}$ and $D_{-\alpha}$ are closely related to the
Maxwell equations. Consider the Maxwell system for a homogeneous chiral medium
(see, e.g., \cite{LVV, Lindell})%

\begin{equation}
\operatorname{rot}\overrightarrow{E}\left(  x\right)  =-i\alpha\left(
\overrightarrow{H}\left(  x\right)  +\beta\operatorname{rot}\overrightarrow
{H}\left(  x\right)  \right)  \label{M12}%
\end{equation}
and%
\begin{equation}
\operatorname{rot}\overrightarrow{H}\left(  x\right)  =i\alpha\left(
\overrightarrow{E}\left(  x\right)  +\beta\operatorname{rot}\overrightarrow
{E}\left(  x\right)  \right)  , \label{M13}%
\end{equation}
where $\alpha=\omega\sqrt{\varepsilon\mu}$. Some examples of numerical values
of $\beta$ for physical media can be found, e.g., in \cite{Lindell}. We notice
only that when $\beta=0$ we obtain the Maxwell system for a homogeneous,
isotropic achiral medium with the wave number $\alpha$.

The vectors $\overrightarrow{E}$ and $\overrightarrow{H}$ in (\ref{M12}) and
(\ref{M13}) are complex. Consider the following purely vectorial
biquaternionic functions%
\[
\overrightarrow{\varphi}=\overrightarrow{E}+i\overrightarrow{H}\qquad
\text{and\qquad}\overrightarrow{\psi}=\overrightarrow{E}-i\overrightarrow{H}.
\]
It is easy to verify (see \cite{KKO}, \cite{KOZAA} or \cite{KKR}) that
$\overrightarrow{\varphi}$ and $\overrightarrow{\psi}$ satisfy the following equations%

\[
\left(  D+\alpha_{1}\right)  \overrightarrow{\varphi}=0
\]
and
\[
\left(  D-\alpha_{2}\right)  \overrightarrow{\psi}=0,
\]
where
\[
\alpha_{1}=\frac{\alpha}{(1+\alpha\beta)},\qquad\alpha_{2}=\frac{\alpha
}{(1-\alpha\beta)}.
\]

\begin{remark}
If $\beta=0$ then $\alpha_{1}=\alpha_{2}=\alpha$ and we arrive at the
quaternionic form of the Maxwell equations in the achiral case (see
\cite[Sect. 9]{KSbook}, \cite{AQA}) but in general $\alpha_{1}$ and
$\alpha_{2}$ are different and physically characterize the propagation of
electromagnetic waves of opposing circular polarizations.
\end{remark}

Obviously the vectors $\overrightarrow{E}$ and $\overrightarrow{H}$ are easily
recovered from $\overrightarrow{\varphi}$ and $\overrightarrow{\psi}$:%
\[
\overrightarrow{E}=\frac{1}{2}(\overrightarrow{\varphi}+\overrightarrow{\psi
})\qquad\text{\ and \qquad}\overrightarrow{H}=\frac{1}{2i}(\overrightarrow
{\varphi}-\overrightarrow{\psi}).
\]

\section{Completeness of a system of biquaternionic fundamental solutions}

Let $\Gamma$ be a sufficiently smooth closed surface in $\mathbb{R}^{3}$. Here
we use the term sufficiently smooth for surfaces whose smoothness allows us to
introduce the corresponding Sobolev space $H^{s}(\Gamma)$ for a given
$s\in\mathbb{R}$.

The interior domain enclosed by $\Gamma$ we denote by $\Omega^{+}$ and the
exterior by $\Omega^{-}$.

Let $\overrightarrow{e}$ and $\overrightarrow{h}$ be two complex vectors
defined on $\Gamma$.

\begin{definition}
We say that $\overrightarrow{e}$ and $\overrightarrow{h}$ are extendable into
$\Omega^{+}$ if there exist such pair of vectors $\overrightarrow{E}$ and
$\overrightarrow{H}$ defined on $\overline{\Omega^{+}}$ that equations
(\ref{M12}) and (\ref{M13}) are satisfied in $\Omega^{+}$ and on $\Gamma$ we
have $\overrightarrow{E}\mid_{\Gamma}=\overrightarrow{e}$ and $\overrightarrow
{H}\mid_{\Gamma}=\overrightarrow{h}$.
\end{definition}

\begin{definition}
The vectors $\overrightarrow{e}$ and $\overrightarrow{h}$ are extendable into
$\Omega^{-}$ if there exist such pair of vectors $\overrightarrow{E}$ and
$\overrightarrow{H}$ defined on $\overline{\Omega^{-}}$ that equations
(\ref{M12}) and (\ref{M13}) are satisfied in $\Omega^{-}$, the
Silver-M\"{u}ller condition
\begin{equation}
\overrightarrow{E}-\left[  \frac{x}{\left|  x\right|  }\times\overrightarrow
{H}\right]  =o(\frac{1}{\left|  x\right|  }) \label{SM}%
\end{equation}
is fulfilled at infinity uniformly for all directions and on $\Gamma$:
$\overrightarrow{E}\mid_{\Gamma}=\overrightarrow{e}$ and $\overrightarrow
{H}\mid_{\Gamma}=\overrightarrow{h}$.
\end{definition}

With the aid of quaternionic analysis techniques these two introduced classes
of vector functions can be completely described. For achiral media it was done
in \cite{Krdep} (see also \cite[Sect. 11]{KSbook}) and for chiral media in
\cite{KKO}. Here we recall these results without proof.

We will need the following operators%
\begin{equation}
S_{\alpha}f(x)=-2\int_{\Gamma}\mathcal{K}_{\alpha}(x-y)\overrightarrow
{n}(y)f(y)d\Gamma_{y},\quad x\in\Gamma, \label{Sa}%
\end{equation}%
\[
P_{\alpha}=\frac{1}{2}(I+S_{\alpha})\quad\text{and\quad}Q_{\alpha}=\frac{1}%
{2}(I-S_{\alpha})
\]
which are bounded in $H^{s}(\Gamma)$ for all real $s$. The function $f$ in
(\ref{Sa}) is a biquaternion valued function, $\overrightarrow{n}$ is the
quaternionic representation of the outward with respect to $\Omega^{+}$
unitary normal on $\Gamma$: $\overrightarrow{n}=\sum\nolimits_{k=1}^{3}%
n_{k}i_{k}$ and all the products in the integrand in (\ref{Sa}) are
quaternionic products. From the numerous interesting properties of the
operators $P_{\alpha}$, $Q_{\alpha}$ and $S_{\alpha}$ (see \cite{KSbook}) we
will need only the following fact

\begin{theorem}
Let complex vectors $\overrightarrow{e}$ and $\overrightarrow{h}$ belong to
$H^{s}(\Gamma)$, $s>0$. Then

\begin{enumerate}
\item in order for $\overrightarrow{e}$ and $\overrightarrow{h}$ to be
extendable into $\Omega^{+}$ the following condition is necessary and
sufficient%
\begin{equation}
(\overrightarrow{e}+i\overrightarrow{h})\in\operatorname{im}P_{\alpha_{1}%
}(H^{s}(\Gamma))\text{\qquad and\qquad}(\overrightarrow{e}-i\overrightarrow
{h})\in\operatorname{im}P_{-\alpha_{2}}(H^{s}(\Gamma)) \label{iffint}%
\end{equation}
or which is the same
\[
\overrightarrow{e}+i\overrightarrow{h}=S_{\alpha_{1}}(\overrightarrow
{e}+i\overrightarrow{h})\text{\qquad and\qquad}\overrightarrow{e}%
-i\overrightarrow{h}=S_{-\alpha_{2}}(\overrightarrow{e}-i\overrightarrow
{h})\qquad\text{on }\Gamma\text{.}%
\]

\item in order for $\overrightarrow{e}$ and $\overrightarrow{h}$ to be
extendable into $\Omega^{-}$ the following condition is necessary and
sufficient%
\begin{equation}
(\overrightarrow{e}+i\overrightarrow{h})\in\operatorname{im}Q_{\alpha_{1}%
}(H^{s}(\Gamma))\text{\qquad and\qquad}(\overrightarrow{e}-i\overrightarrow
{h})\in\operatorname{im}Q_{-\alpha_{2}}(H^{s}(\Gamma)) \label{iffext}%
\end{equation}
or which is the same
\[
\overrightarrow{e}+i\overrightarrow{h}=-S_{\alpha_{1}}(\overrightarrow
{e}+i\overrightarrow{h})\text{\qquad and\qquad}\overrightarrow{e}%
-i\overrightarrow{h}=-S_{-\alpha_{2}}(\overrightarrow{e}-i\overrightarrow
{h})\qquad\text{on }\Gamma\text{.}%
\]

\end{enumerate}
\end{theorem}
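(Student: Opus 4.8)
The plan is to reduce the vector Maxwell system to the two decoupled biquaternionic equations $(D+\alpha_1)\overrightarrow{\varphi}=0$ and $(D-\alpha_2)\overrightarrow{\psi}=0$ established in Section 3, and then apply the known quaternionic extension theory for the operators $D_{\alpha}=D+\alpha$ one equation at a time. First I would observe that the two boundary conditions $\overrightarrow{E}|_\Gamma=\overrightarrow{e}$ and $\overrightarrow{H}|_\Gamma=\overrightarrow{h}$ are equivalent, via the reversible change of variables $\overrightarrow{\varphi}=\overrightarrow{E}+i\overrightarrow{H}$, $\overrightarrow{\psi}=\overrightarrow{E}-i\overrightarrow{H}$, to the boundary data $\overrightarrow{\varphi}|_\Gamma=\overrightarrow{e}+i\overrightarrow{h}$ and $\overrightarrow{\psi}|_\Gamma=\overrightarrow{e}-i\overrightarrow{h}$. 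Since $\overrightarrow{E},\overrightarrow{H}$ solve (\ref{M12})--(\ref{M13}) in $\Omega^{+}$ if and only if $\overrightarrow{\varphi}$ satisfies $(D+\alpha_1)\overrightarrow{\varphi}=0$ and $\overrightarrow{\psi}$ satisfies $(D-\alpha_2)\overrightarrow{\psi}=0$ there, the extendability question for the pair $(\overrightarrow{e},\overrightarrow{h})$ splits cleanly into two independent extendability questions, one for $\overrightarrow{e}+i\overrightarrow{h}$ as boundary data of a null-solution of $D+\alpha_1$, the other for $\overrightarrow{e}-i\overrightarrow{h}$ as boundary data of a null-solution of $D-\alpha_2$.

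The core of the argument is then the characterization of boundary values of null-solutions of a single operator $D_\gamma$ in terms of the Cauchy-type projector $P_\gamma=\frac12(I+S_\gamma)$. The key step I would invoke is the quaternionic Cauchy integral (the biquaternionic Plemelj--Sokhotski / boundary projection formalism), already developed in \cite{Krdep}, \cite{KSbook} and \cite{KKO}: a function $g\in H^s(\Gamma)$ is the boundary trace of a solution of $D_\gamma u=0$ in $\Omega^{+}$ (respectively in $\Omega^{-}$, under the Silver--M\"uller radiation condition (\ref{SM})) precisely when $g$ lies in the image of the boundary projector $P_\gamma$ (respectively $Q_\gamma$), equivalently when $S_\gamma g=g$ (respectively $S_\gamma g=-g$). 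This rests on the fact that $P_\gamma$ and $Q_\gamma$ are complementary projectors whose ranges are exactly the traces of interior and exterior null-solutions of $D_\gamma$; interior extendability corresponds to $g\in\operatorname{im}P_\gamma$, and exterior extendability with the radiation condition corresponds to $g\in\operatorname{im}Q_\gamma$. Applying this with $\gamma=\alpha_1$ to $\overrightarrow{e}+i\overrightarrow{h}$ and with $\gamma=-\alpha_2$ to $\overrightarrow{e}-i\overrightarrow{h}$ yields exactly conditions (\ref{iffint}) for $\Omega^{+}$ and (\ref{iffext}) for $\Omega^{-}$, and the reformulations $g=\pm S_\gamma g$ follow immediately from $P_\gamma g=g\iff S_\gamma g=g$ and $Q_\gamma g=g\iff S_\gamma g=-g$.

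Two points require care, and the second is the real obstacle. The first, a bookkeeping matter, is that the operator $D_{-\alpha_2}$ in the equation for $\overrightarrow{\psi}$ must be matched with the fundamental solution $\mathcal{K}_{-\alpha_2}$ and hence with $S_{-\alpha_2}$ rather than $S_{\alpha_2}$; keeping the sign of the parameter consistent between the differential operator and its Cauchy kernel is what produces the subscript $-\alpha_2$ in the statement, and I would verify this directly from the definitions (\ref{fund})--(\ref{Sa}). The genuinely hard part is the exterior case: one must confirm that the radiation behaviour built into the scalar kernel $\theta_\alpha$ (valid for $\operatorname{Im}\alpha\ge0$) translates correctly, through the decomposition $\overrightarrow{E}=\tfrac12(\overrightarrow{\varphi}+\overrightarrow{\psi})$ and $\overrightarrow{H}=\tfrac1{2i}(\overrightarrow{\varphi}-\overrightarrow{\psi})$, into the Silver--M\"uller condition (\ref{SM}) for the physical fields, and conversely that (\ref{SM}) forces each biquaternionic component to satisfy the correct quaternionic radiation condition so that the exterior projector $Q_\gamma$ genuinely characterizes its trace. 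This asymptotic matching at infinity, together with the sign constraints on $\alpha_1$ and $\alpha_2$ needed to keep $\operatorname{Im}(\pm\gamma)\ge0$, is where the proof in \cite{KKO} concentrates its effort; since the theorem is quoted here without proof, I would cite that analysis and restrict myself to checking the algebraic equivalences above.
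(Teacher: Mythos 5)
The paper states this theorem explicitly without proof (``Here we recall these results without proof''), deferring to \cite{Krdep}, \cite[Sect.~11]{KSbook} and \cite{KKO}, so there is no in-paper argument to compare against; your outline reproduces exactly the route those references take --- decoupling via $\overrightarrow{\varphi}=\overrightarrow{E}+i\overrightarrow{H}$, $\overrightarrow{\psi}=\overrightarrow{E}-i\overrightarrow{H}$ into the two equations $(D+\alpha_{1})\overrightarrow{\varphi}=0$, $(D-\alpha_{2})\overrightarrow{\psi}=0$ and then invoking the Plemelj--Sokhotski projector characterization of traces for each $D_{\gamma}$ separately. Your sketch is correct, the sign bookkeeping for $S_{-\alpha_{2}}$ is right, and you correctly isolate the Silver--M\"{u}ller matching at infinity as the one nontrivial step that must be imported from \cite{KKO}, which is consistent with the level of detail the paper itself provides.
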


Now we will show how two systems of quaternionic fundamental solutions
suitable for the approximation of the vector functions extendable into
$\Omega^{+}$ or $\Omega^{-}$ can be constructed.

By $\Gamma^{-}$ we denote a closed surface enclosed in $\Omega^{+}$ and being
a boundary of a bounded domain $V$, and by $\Gamma^{+}$ we denote a closed
surface enclosing $\overline{\Omega^{+}}$ as shown in the figure.%
%TCIMACRO{\FRAME{ftbpF}{3.736in}{3.0554in}{0pt}{}{}{fig1.jpeg}%
%{\special{ language "Scientific Word";  type "GRAPHIC";
%maintain-aspect-ratio TRUE;  display "USEDEF";  valid_file "F";
%width 3.736in;  height 3.0554in;  depth 0pt;  original-width 3.54in;
%original-height 2.89in;  cropleft "0";  croptop "1";  cropright "1";
%cropbottom "0";  filename 'fig1.bmp';file-properties "XNPEU";}}}%
%BeginExpansion
\begin{figure}
[ptb]
\begin{center}
\includegraphics[
natheight=2.890000in,
natwidth=3.540000in,
height=3.0554in,
width=3.736in
]%
{fig1.bmp}%
\end{center}
\end{figure}
%EndExpansion
By $\left\{  y_{n}^{-}\right\}  _{n=1}^{\infty}$ we denote a set of points
densely distributed on $\Gamma^{-}$, and by $\left\{  y_{n}^{+}\right\}
_{n=1}^{\infty}$ a set of points densely distributed on $\Gamma^{+}$. For each
of these two sets we construct a corresponding pair of systems of quaternionic
fundamental solutions. The pair of systems%
\begin{equation}
\left\{  \mathcal{K}_{\alpha_{1},n}^{+}(x)=\mathcal{K}_{\alpha_{1}}%
(x-y_{n}^{+})\right\}  _{n=1}^{\infty}\text{\qquad and\qquad}\left\{
\mathcal{K}_{-\alpha_{2},n}^{+}(x)=\mathcal{K}_{-\alpha_{2}}(x-y_{n}%
^{+})\right\}  _{n=1}^{\infty}\label{syst+}%
\end{equation}
corresponds to $\left\{  y_{n}^{+}\right\}  _{n=1}^{\infty}$ and the pair of
systems%
\begin{equation}
\left\{  \mathcal{K}_{\alpha_{1},n}^{-}(x)=\mathcal{K}_{\alpha_{1}}%
(x-y_{n}^{-})\right\}  _{n=1}^{\infty}\text{\qquad and\qquad}\left\{
\mathcal{K}_{-\alpha_{2},n}^{-}(x)=\mathcal{K}_{-\alpha_{2}}(x-y_{n}%
^{-})\right\}  _{n=1}^{\infty}\label{syst-}%
\end{equation}
corresponds to $\left\{  y_{n}^{-}\right\}  _{n=1}^{\infty}$.

The following theorems show us the possibility to apply the fundamental
solutions (\ref{syst+}) for the numerical solution of interior boundary value
problems for the Maxwell equations (\ref{M12}), (\ref{M13}), and fundamental
solutions (\ref{syst-}) for the solution of exterior problems.

\begin{theorem}
\label{int}\cite{KKR} Let two complex vectors $\overrightarrow{e}$ and
$\overrightarrow{h}$ belong to $H^{s}(\Gamma)$, $s>1$, be extendable into
$\Omega^{+}$ and both $\alpha_{1}^{2}$ and $\alpha_{2}^{2}$ be not eigenvalues
of the Dirichlet problem in $\Omega^{+}$. Then $\overrightarrow{e}$ and
$\overrightarrow{h}$ can be approximated with an arbitrary precision (in the
norm of $H^{s-1}(\Gamma)$) by right linear combinations of the form%
\[
\overrightarrow{e}_{N}=\frac{1}{2}(\sum_{j=1}^{N}\mathcal{K}_{\alpha_{1}%
,j}^{+}a_{j}+\sum_{j=1}^{N}\mathcal{K}_{-\alpha_{2},j}^{+}b_{j})
\]
and%
\[
\overrightarrow{h}_{N}=\frac{1}{2i}(\sum_{j=1}^{N}\mathcal{K}_{\alpha_{1}%
,j}^{+}a_{j}-\sum_{j=1}^{N}\mathcal{K}_{-\alpha_{2},j}^{+}b_{j}),
\]
where $a_{j}$ and $b_{j}$ are constant complex quaternions.
\end{theorem}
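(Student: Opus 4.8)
The plan is to split the statement into the two independent biquaternionic approximation problems and to establish the density of each system of fundamental solutions by a Hahn--Banach duality argument, the crux being a uniqueness-and-jump analysis across $\Gamma$.

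First I would reduce to null-solution form. Writing $\overrightarrow{\varphi}=\overrightarrow{e}+i\overrightarrow{h}$ and $\overrightarrow{\psi}=\overrightarrow{e}-i\overrightarrow{h}$, the relations $\overrightarrow{e}=\tfrac12(\overrightarrow{\varphi}+\overrightarrow{\psi})$, $\overrightarrow{h}=\tfrac1{2i}(\overrightarrow{\varphi}-\overrightarrow{\psi})$ show that it suffices to approximate $\overrightarrow{\varphi}$ on $\Gamma$ by right linear combinations $\sum_j\mathcal{K}_{\alpha_1,j}^{+}a_j$ and $\overrightarrow{\psi}$ by $\sum_j\mathcal{K}_{-\alpha_2,j}^{+}b_j$ separately; the asserted forms of $\overrightarrow{e}_N$ and $\overrightarrow{h}_N$ then follow by linearity. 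By the preceding theorem characterizing extendable fields, extendability of $(\overrightarrow{e},\overrightarrow{h})$ into $\Omega^{+}$ is equivalent to $\overrightarrow{\varphi}\in\operatorname{im}P_{\alpha_1}(H^{s}(\Gamma))$ and $\overrightarrow{\psi}\in\operatorname{im}P_{-\alpha_2}(H^{s}(\Gamma))$, while each $\mathcal{K}_{\alpha_1,j}^{+}$ is itself a null-solution of $D_{\alpha_1}$ in $\Omega^{+}$ (its pole $y_j^{+}$ lies on $\Gamma^{+}\subset\Omega^{-}$) and so has trace in $\operatorname{im}P_{\alpha_1}$. The task is thus the $H^{s-1}(\Gamma)$-density, inside $\operatorname{im}P_{\alpha_1}$, of the traces on $\Gamma$ of $\mathcal{K}_{\alpha_1}(\cdot-y)$ for $y\in\Gamma^{+}$, and the analogous statement with $-\alpha_2$.

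Second, I would prove the density by duality. Suppose $g$ in the appropriate dual space annihilates every $\mathcal{K}_{\alpha_1,n}^{+}$; since $\{y_n^{+}\}$ is dense on $\Gamma^{+}$ and $\mathcal{K}_{\alpha_1}$ depends continuously on its pole, $g$ annihilates $\mathcal{K}_{\alpha_1}(\cdot-y)$ for every $y\in\Gamma^{+}$. I would then form the quaternionic single-layer potential
\[
v(y)=\int_{\Gamma}\mathcal{K}_{\alpha_1}(x-y)\,g(x)\,d\Gamma_x ,
\]
so that $v\equiv0$ on $\Gamma^{+}$. A chain-rule computation gives $D_{-\alpha_1,y}v=-g\,\delta_{\Gamma}$, so off $\Gamma$ the potential $v$ is a null-solution of $D_{-\alpha_1}$ and, by the factorization (\ref{factor}), each of its components solves $(\Delta+\alpha_1^{2})v=0$; because $\operatorname{Im}\alpha\ge0$ the kernel $\theta_{\alpha}$ imposes the radiation behaviour at infinity. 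Uniqueness for the radiating exterior problem outside $\Gamma^{+}$ then yields $v\equiv0$ there, and since $v$ is real-analytic on the connected set $\Omega^{-}$, unique continuation propagates this to $v\equiv0$ throughout $\Omega^{-}$; in particular the exterior trace $v^{-}\mid_{\Gamma}=0$.

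Finally I would exploit the single-layer structure. Since $v$ is continuous across $\Gamma$, $v^{+}\mid_{\Gamma}=v^{-}\mid_{\Gamma}=0$, so $v$ restricted to $\Omega^{+}$ is a null-solution of $D_{-\alpha_1}$ with vanishing Dirichlet trace; this is exactly the point where the hypothesis that $\alpha_1^{2}$ is \emph{not} an eigenvalue of the interior Dirichlet problem is used, forcing $v\equiv0$ in $\Omega^{+}$ as well. Hence $v\equiv0$ on all of $\mathbb{R}^{3}$, and applying $D_{-\alpha_1}$ recovers $g\,\delta_{\Gamma}=0$, i.e. $g=0$; by Hahn--Banach the system (\ref{syst+}) associated with $\alpha_1$ is dense. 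Running the identical argument with $-\alpha_2$ in place of $\alpha_1$ (the relevant Dirichlet parameter being $(-\alpha_2)^{2}=\alpha_2^{2}$) settles $\overrightarrow{\psi}$, and recombining gives $\overrightarrow{e}_N,\overrightarrow{h}_N$, the single-derivative loss $H^{s}\to H^{s-1}$ being dictated by the mapping properties of the boundary operators. I expect the principal obstacle to be the rigorous jump/trace analysis in the Sobolev scale — in particular the continuity of $v$ and the identification $D_{-\alpha_1,y}v=-g\,\delta_{\Gamma}$ with the correct placement of the quaternionic factors and of the normal — and, secondarily, checking that $\operatorname{Im}\alpha_1,\operatorname{Im}\alpha_2\ge0$ so that exterior uniqueness via the radiation condition is legitimate.
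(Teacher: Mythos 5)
The paper itself offers no proof of this theorem --- it is stated as a quoted result (``Here we recall these results without proof'') with the proof in \cite{KKR} --- so your argument can only be judged on its own terms. Your overall architecture (split into $\overrightarrow{\varphi}=\overrightarrow{e}+i\overrightarrow{h}$ and $\overrightarrow{\psi}=\overrightarrow{e}-i\overrightarrow{h}$, reduce to density of each system separately, then a Hahn--Banach duality argument in which a potential vanishes on $\Gamma^{+}$, is killed in $\Omega^{-}$ by exterior uniqueness plus unique continuation, and is analyzed across $\Gamma$) is the right one. But the decisive step is wrong: $\mathcal{K}_{\alpha_{1}}$ is not the single-layer kernel $\theta_{\alpha_{1}}$; by (\ref{fund_d}) it contains $-\operatorname{grad}\theta_{\alpha_{1}}\sim|x|^{-2}$, i.e.\ it is the quaternionic \emph{Cauchy} kernel. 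Consequently $v(y)=\int_{\Gamma}\mathcal{K}_{\alpha_{1}}(x-y)g(x)\,d\Gamma_{x}$ is a Cauchy-type integral with a genuine Plemelj--Sokhotski jump across $\Gamma$ (essentially $\overrightarrow{n}g$), not a continuous single-layer potential. So from $v\equiv0$ in $\Omega^{-}$ you cannot conclude $v^{+}\mid_{\Gamma}=v^{-}\mid_{\Gamma}=0$, and the final chain ($v\equiv0$ in $\Omega^{+}$, hence $g=0$) collapses.

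A structural check confirms that the conclusion $g=0$ cannot be the right target: the traces of all the $\mathcal{K}^{+}_{\alpha_{1},n}$ lie in $\operatorname{im}P_{\alpha_{1}}$, a proper closed subspace of the quaternion-valued Sobolev space (its complement $\operatorname{im}Q_{\alpha_{1}}$ contains, e.g., traces of fundamental solutions with poles inside $\Omega^{+}$), so nonzero annihilating functionals always exist and the system is certainly not complete in all of $H^{s-1}(\Gamma)$. A correct duality argument must show only that an annihilating $g$ kills $\operatorname{im}P_{\alpha_{1}}$ --- the subspace which, by the extendability theorem, contains $\overrightarrow{\varphi}$ --- and that is precisely what the jump relations yield from $v^{-}\mid_{\Gamma}=0$ (one lands in the kernel of an adjoint $Q$-type operator). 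Relatedly, your placement of the hypothesis on $\alpha_{1}^{2}$, $\alpha_{2}^{2}$ is suspect: a null solution of the first-order operator $D_{-\alpha_{1}}$ in $\Omega^{+}$ whose full quaternionic trace vanishes is automatically zero by the quaternionic Cauchy integral formula, with no spectral condition needed, so if your argument were otherwise sound the theorem would hold without that hypothesis. The Dirichlet non-eigenvalue assumption has to enter at the scalar Helmholtz level, through the weakly singular potential $\int_{\Gamma}\theta_{\alpha_{1}}(x-y)g(x)\,d\Gamma_{x}$ hidden inside $v$ via $\mathcal{K}_{\alpha_{1}}=-(D-\alpha_{1})\theta_{\alpha_{1}}$, not where you invoke it.
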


\begin{theorem}
\label{ext}\cite{KKR} Let two complex vectors $\overrightarrow{e}$ and
$\overrightarrow{h}$ belong to $H^{s}(\Gamma)$, $s>1$, be extendable into
$\Omega^{-}$ and let both $\alpha_{1}^{2}$ and $\alpha_{2}^{2}$ be not
eigenvalues of the Dirichlet problem in $V$. Then $\overrightarrow{e}$ and
$\overrightarrow{h}$ can be approximated with an arbitrary precision (in the
norm of $H^{s-1}(\Gamma)$) by right linear combinations of the form%
\begin{equation}
\overrightarrow{e}_{N}=\frac{1}{2}(\sum_{j=1}^{N}\mathcal{K}_{\alpha_{1}%
,j}^{-}a_{j}+\sum_{j=1}^{N}\mathcal{K}_{-\alpha_{2},j}^{-}b_{j}) \label{eext}%
\end{equation}
and%
\begin{equation}
\overrightarrow{h}_{N}=\frac{1}{2i}(\sum_{j=1}^{N}\mathcal{K}_{\alpha_{1}%
,j}^{-}a_{j}-\sum_{j=1}^{N}\mathcal{K}_{-\alpha_{2},j}^{-}b_{j}), \label{hext}%
\end{equation}
where $a_{j}$ and $b_{j}$ are constant complex quaternions.
\end{theorem}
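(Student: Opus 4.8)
The plan is to decouple the chiral Maxwell approximation into two independent approximation problems for the single Dirac-type operators $D_{\alpha_{1}}=D+\alpha_{1}$ and $D_{-\alpha_{2}}=D-\alpha_{2}$, and then to prove a completeness statement for each system of shifted fundamental solutions by a Hahn--Banach (Kupradze-type) duality argument. Writing $\overrightarrow{\varphi}=\overrightarrow{e}+i\overrightarrow{h}$ and $\overrightarrow{\psi}=\overrightarrow{e}-i\overrightarrow{h}$, the extendability of $\overrightarrow{e},\overrightarrow{h}$ into $\Omega^{-}$ means, by part 2 of the preceding theorem, that $\overrightarrow{\varphi}\in\operatorname{im}Q_{\alpha_{1}}(H^{s}(\Gamma))$ and $\overrightarrow{\psi}\in\operatorname{im}Q_{-\alpha_{2}}(H^{s}(\Gamma))$; equivalently, $\overrightarrow{\varphi}$ and $\overrightarrow{\psi}$ are boundary traces of radiating null-solutions of $D_{\alpha_{1}}$ and $D_{-\alpha_{2}}$ in $\Omega^{-}$. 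Since the approximants in (\ref{eext})--(\ref{hext}) are exactly $\overrightarrow{\varphi}_{N}=\sum_{j}\mathcal{K}_{\alpha_{1},j}^{-}a_{j}$ and $\overrightarrow{\psi}_{N}=\sum_{j}\mathcal{K}_{-\alpha_{2},j}^{-}b_{j}$ recombined through $\overrightarrow{e}_{N}=\frac{1}{2}(\overrightarrow{\varphi}_{N}+\overrightarrow{\psi}_{N})$ and $\overrightarrow{h}_{N}=\frac{1}{2i}(\overrightarrow{\varphi}_{N}-\overrightarrow{\psi}_{N})$, and since each $\mathcal{K}_{\alpha_{1}}(x-y_{n}^{-})$ with $y_{n}^{-}\in\Gamma^{-}\subset\Omega^{+}$ is itself a radiating null-solution of $D_{\alpha_{1}}$ in $\Omega^{-}$, it suffices to prove that $\{\mathcal{K}_{\alpha_{1}}(\cdot-y_{n}^{-})\}_{n}$ is complete in $\operatorname{im}Q_{\alpha_{1}}(H^{s}(\Gamma))$ and $\{\mathcal{K}_{-\alpha_{2}}(\cdot-y_{n}^{-})\}_{n}$ in $\operatorname{im}Q_{-\alpha_{2}}(H^{s}(\Gamma))$, both in the $H^{s-1}(\Gamma)$ norm. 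I treat the $\alpha_{1}$ system; the $-\alpha_{2}$ system is identical.

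Because $\{y_{n}^{-}\}$ is dense on $\Gamma^{-}$ and the map $y\mapsto\mathcal{K}_{\alpha_{1}}(\cdot-y)\mid_{\Gamma}$ is continuous from $\Gamma^{-}$ into $H^{s-1}(\Gamma)$ (the kernel is smooth, $\Gamma$ and $\Gamma^{-}$ being disjoint compacta), the discrete system and the continuum family $\{\mathcal{K}_{\alpha_{1}}(\cdot-y)\mid_{\Gamma}:y\in\Gamma^{-}\}$ have the same closed linear span, so I may work with the latter. By Hahn--Banach it is then enough to show that any functional $g$ in the dual of $H^{s-1}(\Gamma)$ annihilating this family also annihilates all of $\operatorname{im}Q_{\alpha_{1}}$. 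Form the quaternionic Cauchy-type potential $F(y):=\langle g,\mathcal{K}_{\alpha_{1}}(\cdot-y)\rangle$ for $y\notin\Gamma$; the annihilation hypothesis says $F\equiv0$ on $\Gamma^{-}$. Differentiating the kernel in $y$ shows $F$ is a null-solution of a Dirac-type operator with parameter $-\alpha_{1}$ off $\Gamma$, so each of its components satisfies the Helmholtz equation $(\Delta+\alpha_{1}^{2})F=0$ in $\Omega^{+}$ (there are no sources there, $g$ being carried by $\Gamma$), and $F$ obeys the radiation condition in $\Omega^{-}$.

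Next I would invoke the spectral hypothesis. Since $F$ solves $(\Delta+\alpha_{1}^{2})F=0$ in $V$ with $F\mid_{\partial V}=F\mid_{\Gamma^{-}}=0$ and $\alpha_{1}^{2}$ is not a Dirichlet eigenvalue of $V$, uniqueness forces $F\equiv0$ in $V$; real-analyticity of Helmholtz solutions (unique continuation) then propagates this to $F\equiv0$ throughout the connected region $\Omega^{+}$, whence the interior trace $F^{+}\mid_{\Gamma}=0$. The Plemelj--Sokhotski jump relations for this potential — the same relations that produce $S_{\alpha_{1}}$ and the projectors $P_{\alpha_{1}},Q_{\alpha_{1}}$ — recover $g$ from the jump $F^{-}\mid_{\Gamma}-F^{+}\mid_{\Gamma}$ across $\Gamma$, so with $F^{+}=0$ the density $g$ is identified (up to the normal factor $\overrightarrow{n}$) with the exterior trace $F^{-}\mid_{\Gamma}$ of the radiating solution $F$. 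Finally, for any $w=W\mid_{\Gamma}\in\operatorname{im}Q_{\alpha_{1}}$ with $W$ a radiating null-solution of $D_{\alpha_{1}}$ in $\Omega^{-}$, I apply the quaternionic Stokes/Borel--Pompeiu identity to the pair $(F,W)$ on $\Omega^{-}$: because $F$ and $W$ are annihilated by the mutually adjoint operators $D_{-\alpha_{1}}$ and $D_{\alpha_{1}}$ the volume term cancels, the $\Gamma$-integral reproduces $\langle g,w\rangle$, and the integral over the large sphere tends to zero since both fields satisfy the radiation condition. Hence $\langle g,w\rangle=0$ for all such $w$, $g$ annihilates $\operatorname{im}Q_{\alpha_{1}}$, and completeness follows.

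I expect the main obstacle to be this last step: the bookkeeping of the jump relations together with the decay estimate showing that the bilinear quaternionic concomitant of two radiating solutions contributes nothing over the sphere at infinity, and keeping the left/right quaternionic multiplications consistent with the right-linear-combination structure of the approximants and with the pairing defining $g$. The implication $F\equiv0$ in $V\Rightarrow F\equiv0$ in $\Omega^{+}$ via unique continuation, and the precise placement of the Dirichlet-eigenvalue hypothesis for $V$ (rather than for $\Omega^{+}$, which is exactly what distinguishes this exterior statement from the interior Theorem \ref{int}), are the remaining points requiring care.
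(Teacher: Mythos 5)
The paper itself contains no proof of Theorem \ref{ext}: it is stated with a citation to \cite{KKR}, and the introduction explicitly refers the reader there for the proofs of the completeness results. So there is no in-paper argument to compare against; judged on its own terms, your proposal is the standard Kupradze-type duality proof and, as far as the structure goes, it is correct and is almost certainly the argument of the cited reference. You decouple via $\overrightarrow{\varphi}=\overrightarrow{e}+i\overrightarrow{h}$, $\overrightarrow{\psi}=\overrightarrow{e}-i\overrightarrow{h}$ exactly as Section 3 suggests, you correctly reduce the simultaneous approximation with shared coefficients to two independent completeness statements, and you place every hypothesis where it is actually needed: density of $\{y_{n}^{-}\}$ on $\Gamma^{-}$ to pass from the discrete system to the continuum family, the Dirichlet non-eigenvalue condition on $V$ (not on $\Omega^{+}$) to kill the annihilating potential $F$ inside $V$ before unique continuation spreads the vanishing to all of $\Omega^{+}$, and the radiation condition to dispose of the integral over the large sphere. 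The points you flag as requiring care are indeed the only substantive gaps, and two of them deserve emphasis. First, the Hahn--Banach functional is $\mathbb{C}$-linear while the span is a right $\mathbb{H}(\mathbb{C})$-module; you must upgrade the functional to a quaternion-valued pairing (using that annihilation of $\mathcal{K}_{\alpha_{1},n}^{-}a$ for all constant quaternions $a$ forces the full quaternionic product to vanish, and that $\operatorname{im}Q_{\alpha_{1}}$ is itself a right module because $S_{\alpha_{1}}$ acts by left multiplications under the integral), and you must insert the normal $\overrightarrow{n}$ correctly so that the Plemelj--Sokhotski relations identify $g$ with the exterior trace; without that bookkeeping the final Stokes pairing does not literally reproduce $\langle g,w\rangle$. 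Second, for real $\alpha_{1}$ the vanishing of the quaternionic concomitant over the sphere at infinity is not automatic from decay alone and needs the Rellich-type cancellation between the two radiation conditions. Neither point is an obstruction, but a complete write-up must supply them.
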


\begin{remark}
The right linear combinations in Theorem \ref{int} and Theorem \ref{ext} are
in general full quaternions. In order to ensure that they will be purely
vectorial additionally to a usual boundary condition for the electromagnetic
field we have to add the requirement that their scalar parts be equal to zero.
We show how this can be easily achieved on some examples of numerical
realization considered in the next section.
\end{remark}

\section{Numerical realization}

Consider the exterior boundary value problem for the Maxwell equations
corresponding to the model of electromagnetic scattering by a perfectly
conducting body with a boundary $\Gamma$. Find two vectors $\overrightarrow
{E}$ and $\overrightarrow{H}$ satisfying (\ref{M12}) and (\ref{M13}) in
$\Omega^{-}$, the condition (\ref{SM}) at infinity and the following boundary
condition%
\begin{equation}
\left[  \overrightarrow{E}(x)\times\overrightarrow{n}(x)\right]
=\overrightarrow{f}(x),\qquad x\in\Gamma, \label{b1}%
\end{equation}
where $\overrightarrow{f}$ is a given tangential field.

We look for the solutions in the form (\ref{eext}) and (\ref{hext}), applying
the collocation method in order to find the coefficients $a_{j}$ and $b_{j}$.
Substitution of the vector part of (\ref{eext}) in (\ref{b1}) gives us two
linearly independent equations in every collocation point. In each collocation
point $x\in\Gamma$ we must require also that%
\begin{equation}
\operatorname*{Sc}(\sum_{j=1}^{N}\mathcal{K}_{\alpha_{1},j}^{-}(x)a_{j}%
+\sum_{j=1}^{N}\mathcal{K}_{-\alpha_{2},j}^{-}(x)b_{j})=0, \label{Sc1}%
\end{equation}
and%
\begin{equation}
\operatorname*{Sc}(\sum_{j=1}^{N}\mathcal{K}_{\alpha_{1},j}^{-}(x)a_{j}%
-\sum_{j=1}^{N}\mathcal{K}_{-\alpha_{2},j}^{-}(x)b_{j})=0, \label{Sc2}%
\end{equation}
which gives us other two linearly independent equations. Taking into account
that in (\ref{eext}) and (\ref{hext}) we have $8N$ unknown complex quantities
we need $2N$ collocation points. After having solved the corresponding system
of linear algebraic equations we obtain the coefficients $a_{j}$ and $b_{j}$
and consequently the approximate solution of the problem. A good approximation
of the boundary condition (\ref{b1}) guarantees a good approximation of the
electromagnetic field in the domain $\Omega^{-}$ due to the following estimate
(see \cite[p. 126]{Eremin})%
\[
\left\|  \overrightarrow{E}\right\|  _{\infty,\Omega_{r}^{-}}+\left\|
\overrightarrow{H}\right\|  _{\infty,\Omega_{r}^{-}}\leq C\left\|
\overrightarrow{n}\times\overrightarrow{E}\right\|  _{L_{2}(\Gamma)}%
\]
where $\left\|  \cdot\right\|  _{\infty,\Omega_{r}^{-}}$ stands for the
supremum norm in any closed subset $\Omega_{r}^{-}$ of $\Omega^{-}$ and $C$ is
a positive constant depending on $\Gamma$ and $\Omega_{r}^{-}$.

The method was tested \cite{KiraPhD}, \cite{KKRAlicante}, \cite{KKR} using
different exact solutions. For example, let $\beta=0$ and consequently
$\alpha=\alpha_{1}=\alpha_{2}$. The vectors%

\[
\overrightarrow{E}^{m}(x)=\operatorname*{rot}\overrightarrow{c}\theta_{\alpha
}(x)
\]
and%
\[
\overrightarrow{H}^{m}(x)=-\frac{1}{i\alpha}\operatorname*{rot}\overrightarrow
{E}^{m}(x),\qquad x\in\mathbb{R}^{3}\setminus\left\{  0\right\}  ,
\]
where $\overrightarrow{c}\in\mathbb{R}^{3}$ is constant, represent the
electromagnetic field of a magnetic dipole situated at the origin \cite[Sect.
4.2]{CK1}. They satisfy (\ref{M12}) and (\ref{M13}) (for $\beta=0)$ as well as
the Silver-M\"{u}ller conditions at infinity.

Let $\Gamma$ be an ellipsoid described by the equalities.
\begin{equation}
x_{1}=a\cos\eta\sin\nu,\quad x_{2}=b\sin\eta\sin\nu,\quad x_{3}=c\cos\nu,
\label{ellips}%
\end{equation}
where $0<\eta\leq2\pi$, $0<\nu\leq\not \pi $. Then $\overrightarrow{E}^{m}$
and $\overrightarrow{H}^{m}$ give us the solution of the following boundary
value problem%
\[
\operatorname*{rot}\overrightarrow{E}(x)=-i\alpha\overrightarrow{H}(x),\qquad
x\in\Omega^{-},
\]%
\[
\operatorname*{rot}\overrightarrow{H}(x)=i\alpha\overrightarrow{E}(x),\qquad
x\in\Omega^{-},
\]

\[
\left[  \overrightarrow{E}(x)\times\overrightarrow{n}(x)\right]
=\overrightarrow{f}(x),\qquad x\in\Gamma
\]
where
\[
\overrightarrow{f}(x)=\left[  \left(
\begin{array}
[c]{c}%
c_{3}\partial_{2}\theta_{\alpha}(x)-c_{2}\partial_{3}\theta_{\alpha}(x)\\
c_{1}\partial_{3}\theta_{\alpha}(x)-c_{3}\partial_{1}\theta_{\alpha}(x)\\
c_{2}\partial_{1}\theta_{\alpha}(x)-c_{1}\partial_{2}\theta_{\alpha}(x)
\end{array}
\right)  \times\overrightarrow{n}(x)\right]  .
\]

We give the numerical results for $a=5$, $b=3$ and $c=2$ in (\ref{ellips}). As
the auxiliary surface $\Gamma^{-}$ containing points $y_{n}^{-}$ we have
chosen an ellipsoid interior with respect to $\Gamma$ with $a$, $b$ and $c$
multiplied by $0.15$. In the following table we present the results for
$\alpha=1+0.3i$ and for different values of $N$. The corresponding errors
represent the absolute maximum difference between the exact and the
approximate solutions at the points on the ellipsoid exterior with respect to
$\Gamma$ with $a$, $b$ and $c$ multiplied by $5$.

\begin{center}%
\begin{tabular}
[c]{||c|c|c||}\hline\hline
$N$ & Error for $\overrightarrow{E}$ & Error for $\overrightarrow{H}$\\\hline
10 & 0.441E-03 \  & 0.332E-03 \ \\\hline
15 & 0.693E-05 & 0.713E-05\\\hline
20 & 0.162E-05 & 0.186E-05\\\hline
25 & 0.245E-06 & 0.248E-06\\\hline
30 & 0.113E-06 & 0.171E-06\\\hline
35 & 0.522E-07 & 0.409E-07\\\hline\hline
\end{tabular}

\end{center}

A quite fast convergence of the method can be appreciated (all numerical
results were obtained on a PC Pentium 4).

Let us notice that the approximation by linear combinations of quaternionic
fundamental solutions can be applied to other classes of boundary value
problems for the Maxwell system like for example the impedance problem with
the boundary condition%
\[
\left[  \overrightarrow{E}(x)\times\overrightarrow{n}(x)\right]  -\xi\left[
\left[  \overrightarrow{H}(x)\times\overrightarrow{n}(x)\right]
\times\overrightarrow{n}(x)\right]  =\overrightarrow{f}(x),\qquad x\in\Gamma.
\]
This implies some obvious changes in the matrix of coefficients of the system
of linear algebraic equations corresponding to collocation points.

More results and analysis of numerical experiments were given in
\cite{KiraPhD}.

\section{Time-dependent Maxwell's equations for chiral media\label{Maxw}}

Consider time-dependent Maxwell's equations%

\begin{equation}
\operatorname*{rot}\overrightarrow{E}(t,x)=-\partial_{t}\overrightarrow
{B}(t,x), \label{rot1}%
\end{equation}

\begin{equation}
\operatorname*{rot}\overrightarrow{H}(t,x)=\partial_{t}\overrightarrow
{D}(t,x)+\overrightarrow{j}(t,x), \label{rot2}%
\end{equation}

\begin{equation}
\operatorname*{div}\overrightarrow{E}(t,x)=\frac{\rho(t,x)}{\varepsilon
},\qquad\operatorname*{div}\overrightarrow{H}(t,x)=0 \label{div}%
\end{equation}
with the Drude-Born-Fedorov constitutive relations corresponding to the chiral
media (see, e.g., \cite{ARS}, \cite{Lak}, \cite{Lindell}):
\begin{equation}
\overrightarrow{B}(t,x)=\mu(\overrightarrow{H}(t,x)+\beta\operatorname*{rot}%
\overrightarrow{H}(t,x)), \label{DBF1}%
\end{equation}%
\begin{equation}
\overrightarrow{D}(t,x)=\varepsilon(\overrightarrow{E}(t,x)+\beta
\operatorname*{rot}\overrightarrow{E}(t,x)), \label{DBF2}%
\end{equation}
where $\beta$ is the chirality measure of the medium. $\beta,\varepsilon,\mu$
are real scalars assumed to be constants. Note that the charge density $\rho$
and the current density $\overrightarrow{j}$ are related by the continuity
equation $\partial_{t}\rho+\operatorname*{div}\overrightarrow{j}=0$.

Incorporating the constitutive relations (\ref{DBF1}), (\ref{DBF2}) into the
system (\ref{rot1})-(\ref{div}) we arrive at the time-dependent Maxwell system
for a homogeneous chiral medium%

\begin{equation}
\operatorname*{rot}\overrightarrow{H}(t,x)=\varepsilon(\partial_{t}%
\overrightarrow{E}(t,x)+\beta\partial_{t}\operatorname*{rot}\overrightarrow
{E}(t,x))+\overrightarrow{j}(t,x), \label{Max1}%
\end{equation}%
\begin{equation}
\operatorname*{rot}\overrightarrow{E}(t,x)=-\mu(\partial_{t}\overrightarrow
{H}(t,x)+\beta\partial_{t}\operatorname*{rot}\overrightarrow{H}(t,x)),
\label{Max2}%
\end{equation}

\begin{equation}
\operatorname*{div}\overrightarrow{E}(t,x)=\frac{\rho(t,x)}{\varepsilon
},\qquad\operatorname*{div}\overrightarrow{H}(t,x)=0. \label{Max3}%
\end{equation}

Application of $\operatorname*{rot}$ to (\ref{Max1}) and (\ref{Max2}) allows
us to separate the equations for $\overrightarrow{E}$ and $\overrightarrow{H}$
and to obtain in this way the wave equations for a chiral medium%
\begin{equation}
\operatorname*{rot}\operatorname*{rot}\overrightarrow{E}+\varepsilon
\mu\partial_{t}^{2}\overrightarrow{E}+2\beta\varepsilon\mu\partial_{t}%
^{2}\operatorname*{rot}\overrightarrow{E}+\beta^{2}\varepsilon\mu\partial
_{t}^{2}\operatorname*{rot}\operatorname*{rot}\overrightarrow{E}=-\mu
\partial_{t}\overrightarrow{j}-\beta\mu\partial_{t}\operatorname*{rot}%
\overrightarrow{j}, \label{wave1}%
\end{equation}%
\begin{equation}
\operatorname*{rot}\operatorname*{rot}\overrightarrow{H}+\varepsilon
\mu\partial_{t}^{2}\overrightarrow{H}+2\beta\varepsilon\mu\partial_{t}%
^{2}\operatorname*{rot}\overrightarrow{H}+\beta^{2}\varepsilon\mu\partial
_{t}^{2}\operatorname*{rot}\operatorname*{rot}\overrightarrow{H}%
=\operatorname*{rot}\overrightarrow{j}. \label{wave2}%
\end{equation}

It should be noted that when $\beta=0$, (\ref{wave1}) and (\ref{wave2}) reduce
to the wave equations for non-chiral media but in general to the difference of
the usual non-chiral wave equations their chiral generalizations represent
equations of fourth order.

\section{Field equations in a biquaternionic form}

In this section following \cite{GKK} we rewrite the field equations from
Section \ref{Maxw} in a biquaternionic form.

Let us introduce the following biquaternionic operator
\begin{equation}
M=\beta\sqrt{\varepsilon\mu}\partial_{t}D+\sqrt{\varepsilon\mu}\partial_{t}-iD
\label{A}%
\end{equation}
and consider the purely vectorial biquaternionic function
\begin{equation}
\overrightarrow{V}(t,x)=\overrightarrow{E}(t,x)-i\sqrt{\frac{\mu}{\varepsilon
}}\overrightarrow{H}(t,x). \label{V}%
\end{equation}

\begin{proposition}
\cite{GKK} The equation
\begin{equation}
M\overrightarrow{V}(t,x)=-\sqrt{\frac{\mu}{\varepsilon}}\overrightarrow
{j}(t,x)-\beta\sqrt{\frac{\mu}{\varepsilon}}\partial_{t}\rho(t,x)+\frac
{i\rho(t,x)}{\varepsilon} \label{Amax}%
\end{equation}
is equivalent to the Maxwell system (\ref{Max1})-(\ref{Max3}), the vectors
$\overrightarrow{E}$ and $\overrightarrow{H}$ are solutions of (\ref{Max1}%
)-(\ref{Max3}) if and only if the purely vectorial biquaternionic function
$\overrightarrow{V}$ defined by (\ref{V}) is a solution of (\ref{Amax}).
\end{proposition}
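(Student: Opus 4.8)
The plan is to turn the single biquaternionic equation (\ref{Amax}) into its scalar and vector components, then into a purely real system, and to match this against the eight real scalar equations packaged in (\ref{Max1})--(\ref{Max3}). Since $\overrightarrow{V}$ is purely vectorial, the vector formula $Df=-\operatorname{div}\overrightarrow{f}+\operatorname{grad}f_{0}+\operatorname{rot}\overrightarrow{f}$ gives $D\overrightarrow{V}=-\operatorname{div}\overrightarrow{V}+\operatorname{rot}\overrightarrow{V}$. Applying $M=\beta\sqrt{\varepsilon\mu}\,\partial_{t}D+\sqrt{\varepsilon\mu}\,\partial_{t}-iD$ and collecting terms, I expect $\operatorname{Sc}(M\overrightarrow{V})=-\beta\sqrt{\varepsilon\mu}\,\partial_{t}\operatorname{div}\overrightarrow{V}+i\operatorname{div}\overrightarrow{V}$ and $\operatorname{Vec}(M\overrightarrow{V})=\beta\sqrt{\varepsilon\mu}\,\partial_{t}\operatorname{rot}\overrightarrow{V}+\sqrt{\varepsilon\mu}\,\partial_{t}\overrightarrow{V}-i\operatorname{rot}\overrightarrow{V}$. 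I would then substitute $\overrightarrow{V}=\overrightarrow{E}-i\sqrt{\mu/\varepsilon}\,\overrightarrow{H}$ and, using that $\overrightarrow{E},\overrightarrow{H},\overrightarrow{j},\rho$ and the constants $\varepsilon,\mu,\beta$ are real, separate each part into its real and imaginary components; the scalar and vector parts of (\ref{Amax}) thereby split into eight real equations, exactly matching the count in (\ref{Max1})--(\ref{Max3}).

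The vector part I expect to be a direct and reversible computation. After inserting $\operatorname{rot}\overrightarrow{V}=\operatorname{rot}\overrightarrow{E}-i\sqrt{\mu/\varepsilon}\operatorname{rot}\overrightarrow{H}$ and $\partial_{t}\overrightarrow{V}=\partial_{t}\overrightarrow{E}-i\sqrt{\mu/\varepsilon}\,\partial_{t}\overrightarrow{H}$ and using the elementary identities $\sqrt{\varepsilon\mu}\,\sqrt{\mu/\varepsilon}=\mu$, $\,\sqrt{\varepsilon\mu}/\sqrt{\mu/\varepsilon}=\varepsilon$ and $\,\sqrt{\varepsilon\mu}/\varepsilon=\sqrt{\mu/\varepsilon}$, the vector component of (\ref{Amax}), whose right-hand side is the purely vectorial $-\sqrt{\mu/\varepsilon}\,\overrightarrow{j}$, splits cleanly: its real part, after rearrangement and division by $\sqrt{\mu/\varepsilon}$, reproduces (\ref{Max1}), while its imaginary part (with vanishing right-hand side) reproduces (\ref{Max2}). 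Each manipulation is an equivalence, so the vector component of (\ref{Amax}) holds if and only if (\ref{Max1}) and (\ref{Max2}) hold.

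The scalar part needs more care, and this is where I expect the main obstacle. In one direction it is immediate: if (\ref{Max3}) holds then $\operatorname{div}\overrightarrow{V}=\rho/\varepsilon$, and substituting this into $\operatorname{Sc}(M\overrightarrow{V})$ reproduces the scalar part $-\beta\sqrt{\mu/\varepsilon}\,\partial_{t}\rho+i\rho/\varepsilon$ of the right-hand side of (\ref{Amax}). The converse is delicate: the two real scalar equations extracted from $\operatorname{Sc}(M\overrightarrow{V})$, namely $\operatorname{div}\overrightarrow{E}+\beta\mu\,\partial_{t}\operatorname{div}\overrightarrow{H}=\rho/\varepsilon$ and $-\beta\sqrt{\varepsilon\mu}\,\partial_{t}\operatorname{div}\overrightarrow{E}+\sqrt{\mu/\varepsilon}\operatorname{div}\overrightarrow{H}=-\beta\sqrt{\mu/\varepsilon}\,\partial_{t}\rho$, form a coupled first-order-in-$t$ system in $\operatorname{div}\overrightarrow{E}$ and $\operatorname{div}\overrightarrow{H}$ that by itself admits nontrivial homogeneous solutions, so (\ref{Max3}) cannot be read off directly. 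To close this gap I would invoke the vector part, already shown equivalent to (\ref{Max1})--(\ref{Max2}), and take the divergence of (\ref{Max2}); since $\operatorname{div}\operatorname{rot}\equiv0$ this yields $\partial_{t}\operatorname{div}\overrightarrow{H}=0$. Feeding this into the first scalar equation forces $\operatorname{div}\overrightarrow{E}=\rho/\varepsilon$, and substituting that into the second scalar equation (the $\rho$-terms cancelling by $\sqrt{\varepsilon\mu}/\varepsilon=\sqrt{\mu/\varepsilon}$) forces $\operatorname{div}\overrightarrow{H}=0$, which is exactly (\ref{Max3}). Assembling the scalar and vector equivalences then gives that (\ref{Amax}) holds if and only if (\ref{Max1})--(\ref{Max3}) hold. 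Throughout I may freely use the continuity relation $\partial_{t}\rho+\operatorname{div}\overrightarrow{j}=0$, although the decisive auxiliary fact is the $\partial_{t}\operatorname{div}\overrightarrow{H}=0$ obtained from (\ref{Max2}).
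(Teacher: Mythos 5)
Your proposal is correct and follows essentially the same route as the paper: split $M\overrightarrow{V}$ into scalar and vector parts, identify the real and imaginary components of the vector part with (\ref{Max1}) and (\ref{Max2}), and then use the divergence of those equations to extract (\ref{Max3}) from the scalar part. The only (harmless) variation is in the last step: the paper applies the divergence to both components of the vector equation together with the continuity equation to get $\partial_t\operatorname{div}\overrightarrow{H}=0$ and $\partial_t\operatorname{div}\overrightarrow{E}=\partial_t\rho/\varepsilon$ before substituting into the scalar part, whereas you need only $\partial_t\operatorname{div}\overrightarrow{H}=0$ and read off $\operatorname{div}\overrightarrow{E}=\rho/\varepsilon$ and $\operatorname{div}\overrightarrow{H}=0$ directly -- and you are rather more explicit than the paper about why the converse direction of the scalar part requires this input.
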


\begin{proof}
The scalar and the vector parts of (\ref{Amax}) have the form%
\begin{equation}
-\beta\sqrt{\varepsilon\mu}\partial_{t}\operatorname*{div}\overrightarrow
{E}+\sqrt{\frac{\mu}{\varepsilon}}\operatorname*{div}\overrightarrow
{H}+i(\operatorname*{div}\overrightarrow{E}+\beta\mu\partial_{t}%
\operatorname*{div}\overrightarrow{H})=-\beta\sqrt{\frac{\mu}{\varepsilon}%
}\partial_{t}\rho+\frac{i\rho}{\varepsilon}, \label{sc}%
\end{equation}%
\begin{equation}
\beta\sqrt{\varepsilon\mu}\partial_{t}\operatorname*{rot}\overrightarrow
{E}+\sqrt{\varepsilon\mu}\partial_{t}\overrightarrow{E}-\sqrt{\frac{\mu
}{\varepsilon}}\operatorname*{rot}\overrightarrow{H}-i(\operatorname*{rot}%
\overrightarrow{E}+\beta\mu\partial_{t}\operatorname*{rot}\overrightarrow
{H}+\mu\partial_{t}\overrightarrow{H})=-\sqrt{\frac{\mu}{\varepsilon}%
}\overrightarrow{j}. \label{vec}%
\end{equation}
The real part of (\ref{vec}) coincides with (\ref{Max1}) and the imaginary
part coincides with (\ref{Max2}). Applying divergence to the equation
(\ref{vec}) and using the continuity equation gives us
\[
\partial_{t}\operatorname*{div}\overrightarrow{H}=0\text{\quad and\quad
}\partial_{t}\operatorname*{div}\overrightarrow{E}=\frac{1}{\varepsilon
}\partial_{t}\rho.
\]
Taking into account these two equalities we obtain from (\ref{sc}) that the
vectors $\overrightarrow{E}$ and $\overrightarrow{H}$ satisfy equations
(\ref{Max3}).
\end{proof}

It should be noted that for $\beta=0$ from (\ref{A}) we obtain the
biquaternionic Maxwell operator for a homogeneous achiral medium for which the
following equality is valid%
\[
\varepsilon\mu\partial_{t}^{2}-\Delta_{x}=(\sqrt{\varepsilon\mu}\partial
_{t}+iD)(\sqrt{\varepsilon\mu}\partial_{t}-iD).
\]
In the case under consideration ($\beta\neq0$) we obtain a similar result. Let
us denote by $M^{\ast}$ the complex conjugate operator of $M$:%
\[
M^{\ast}=\beta\sqrt{\varepsilon\mu}\partial_{t}D+\sqrt{\varepsilon\mu}%
\partial_{t}+iD.
\]
For simplicity we consider now a sourceless situation. In this case the
equations (\ref{wave1}) and (\ref{wave2}) are homogeneous and can be
represented as follows
\[
MM^{\ast}\overrightarrow{U}(t,x)=0,
\]
where $\overrightarrow{U}\ $stands for $\overrightarrow{E}$ or for
$\overrightarrow{H}$.

\section{Green function for the operator $M$}

Here we present a procedure from \cite{GKK} which gives us a Green function
for the operator $M$. Consider the equation%
\[
(\beta\sqrt{\varepsilon\mu}\partial_{t}D+\sqrt{\varepsilon\mu}\partial
_{t}-iD)f(t,x)=\delta(t,x).
\]
Applying the Fourier transform $\mathcal{F}$ with respect to the time-variable
$t$ we obtain%
\[
(\beta\sqrt{\varepsilon\mu}i\omega D+\sqrt{\varepsilon\mu}i\omega
-iD)F(\omega,x)=\delta(x),
\]
where $F(\omega,x)=\mathcal{F}\{f(t,x)\}=\int_{-\infty}^{\infty}%
f(t,x)e^{-i\omega t}dt.$ The last equation can be rewritten as follows%
\[
(D+\alpha)(\beta\sqrt{\varepsilon\mu}\omega-1)iF(\omega,x)=\delta(x),
\]
where $\alpha=\frac{\sqrt{\varepsilon\mu}\omega}{\beta\sqrt{\varepsilon\mu
}\omega-1}.$ The fundamental solution of $D_{\alpha}$ is given by
(\ref{fund_d}), so we have%
\[
(\beta\sqrt{\varepsilon\mu}\omega-1)iF(\omega,x)=\mathcal{K}_{\alpha
}(x)=(\alpha+\frac{x}{\left\vert x\right\vert ^{2}}-i\alpha\frac{x}{\left\vert
x\right\vert })\Theta_{\alpha}(x),
\]
from where%

\[
F(\omega,x)=\left[  \frac{i\sqrt{\varepsilon\mu}\omega}{(\beta\sqrt
{\varepsilon\mu}\omega-1)^{2}}\left(  1-\frac{ix}{\left\vert x\right\vert
}\right)  +\frac{ix}{\left\vert x\right\vert ^{2}}\frac{1}{\beta
\sqrt{\varepsilon\mu}\omega-1}\right]  \frac{e^{i\left\vert x\right\vert
\frac{\sqrt{\varepsilon\mu}\omega}{\beta\sqrt{\varepsilon\mu}\omega-1}}}%
{4\pi\left\vert x\right\vert }.
\]
We write it in a more convenient form
\[
F(\omega,x)=\left(  \frac{1}{\left(  \omega-a\right)  ^{2}}A\left(  x\right)
+\frac{1}{\omega-a}B\left(  x\right)  \right)  E\left(  x\right)
e^{\frac{ic(x)}{\omega-a}},
\]
where $a=\frac{1}{\beta\sqrt{\varepsilon\mu}}$, $c\left(  x\right)
=\frac{\left\vert x\right\vert }{\beta^{2}\sqrt{\varepsilon\mu}}$, $E\left(
x\right)  =\frac{e^{\frac{i\left\vert x\right\vert }{\beta}}}{4\pi\left\vert
x\right\vert },$%
\[
A\left(  x\right)  =\frac{i}{\beta^{3}\varepsilon\mu}\left(  1-\frac
{ix}{\left\vert x\right\vert }\right)  ,\qquad B\left(  x\right)  =\frac
{i}{\beta\sqrt{\varepsilon\mu}}\left(  \frac{1}{\beta}\left(  1-\frac
{ix}{\left\vert x\right\vert }\right)  +\frac{x}{\left\vert x\right\vert ^{2}%
}\right)  .
\]
In order to obtain the fundamental solution $f(t,x)$ we should apply the
inverse Fourier transform to $F(\omega,x)$. Among different regularizations of
the resulting integral we should choose the one leading to a fundamental
solution satisfying the causality principle, that is vanishing for $t<0$. Such
an election is done by introducing a small parameter $y>0$ in the following way%

\begin{equation}
f(t,x)=\lim_{y\rightarrow0}\mathcal{F}^{-1}\left\{  F(z,x)\right\}  \label{f}%
\end{equation}
where $z=\omega-iy$. This regularization is in agreement with the condition
$\operatorname*{Im}\alpha\geq0$. We have%
\begin{equation}
\mathcal{F}^{-1}\left\{  F(z,x)\right\}  =\frac{1}{2\pi}\int_{-\infty}%
^{\infty}\left(  \frac{1}{\left(  \omega-a_{y}\right)  ^{2}}A\left(  x\right)
+\frac{1}{\omega-a_{y}}B\left(  x\right)  \right)  E\left(  x\right)
e^{\frac{ic(x)}{\omega-a_{y}}}e^{i\omega t}d\omega\label{fund1}%
\end{equation}
where $a_{y}=a+iy$. Expression (\ref{fund1}) includes two integrals of the form%

\[
I_{k}=\frac{1}{2\pi}\int_{-\infty}^{\infty}\frac{e^{\frac{ic}{\omega-a_{y}}%
}e^{i\omega t}}{\left(  \omega-a_{y}\right)  ^{k}}d\omega,\quad k=1,2
\]
where $c=c(x)$. We have
\begin{equation}
I_{k}=\frac{1}{2\pi}\sum_{j=0}^{\infty}\left(  \frac{\left(  ic\right)  ^{j}%
}{j!}\int_{-\infty}^{\infty}\frac{e^{i\omega t}d\omega}{\left(  \omega
-a_{y}\right)  ^{j+k}}\right)  , \label{Ik}%
\end{equation}
where the change of order of integration and summation is possible because the
two necessary for this conditions are fulfilled: the series is uniformly
convergent on each segment and the integrals of partial sums converge
uniformly with respect to $j$. Denote
\[
I_{k,j}(t)=\int_{-\infty}^{\infty}\frac{e^{i\omega t}d\omega}{\left(
\omega-a_{y}\right)  ^{j+k}}.
\]
For $k=1$ and $j=0$ we obtain (see, e.g., \cite[Sect. 8.7]{Brem})
\[
I_{1,0}(t)=2\pi iH(t)e^{ita_{y}}%
\]
where $H$ is the Heaviside function. For all other cases, that is for $k=1$
and $j=\overline{1,\infty}$ and for $k=2$ and $j=\overline{0,\infty}$ \ we
have that $j+k\geq2$ and the integrand in (\ref{Ik}) has a pole at the point
$a_{y}$ of order $j+k$. Using a result from the residue theory \cite[Sect.
4.3]{Derrick} we obtain%
\[
I_{k,j}(t)=2\pi i\operatorname*{Res}_{a_{y}}\frac{e^{i\omega t}}{\left(
\omega-a_{y}\right)  ^{j+k}}\quad\text{for }t\geq0\text{ and }j+k\geq2.
\]
Consider%
\[
\operatorname*{Res}_{a_{y}}\frac{e^{i\omega t}}{\left(  \omega-a_{y}\right)
^{j+k}}=\frac{1}{(j+k-1)!}\lim_{\omega\rightarrow a_{y}}\frac{\partial
^{j+k-1}}{\partial\omega^{j+k-1}}e^{i\omega t}=\frac{(it)^{j+k-1}e^{ia_{y}t}%
}{(j+k-1)!}\quad\text{for }t\geq0
\]
and $j+k\geq2.$

For $t<0$ we have that $I_{k,j}(t)$ is equal to the sum of residues with
respect to singularities in the lower half-plane $y<0$ which is zero because
the integrand is analytic there. Thus we obtain%
\[
I_{k,j}(t)=2\pi iH(t)\frac{(it)^{j+k-1}}{(j+k-1)!}e^{ia_{y}t}.
\]
Substitution of this result into (\ref{Ik}) gives us%
\[
I_{1}=iH(t)e^{ia_{y}t}\sum_{j=0}^{\infty}\frac{(-ct)^{j}}{j!j!}\qquad
\text{and}\qquad I_{2}=-H(t)e^{ia_{y}t}t\sum_{j=0}^{\infty}\frac{(-ct)^{j}%
}{j!(j+1)!}.
\]
Now using the series representations of the Bessel functions $J_{0}$ and
$J_{1}$ (see e.g. \cite[Chapter 5]{Vladimirov}) we obtain
\[
I_{1}=iH(t)e^{ia_{y}t}J_{0}\left(  2\sqrt{ct}\right)  \text{\quad and\quad
}I_{2}=-H(t)\sqrt{\frac{t}{c}}e^{ia_{y}t}J_{1}\left(  2\sqrt{ct}\right)  .
\]
Substituting these expressions in (\ref{fund1}) and then in (\ref{f}) we
arrive at the following expression for $f$:%

\[
f(t,x)=H(t)e^{iat}E\left(  x\right)  \left(  -A\left(  x\right)  \sqrt
{\frac{t}{c}}J_{1}\left(  2\sqrt{ct}\right)  +iB\left(  x\right)  J_{0}\left(
2\sqrt{ct}\right)  \right)  .
\]
Finally we rewrite the obtained fundamental solution of the operator $M$ in
explicit form:%
\begin{align*}
f(t,x)  &  =H(t)\frac{e^{\frac{it}{\beta\sqrt{\varepsilon\mu}}}}{\beta
\sqrt{\varepsilon\mu}}\left(  \mathcal{K}_{\frac{1}{\beta}}(x)J_{0}\left(
\frac{2\sqrt{t\left|  x\right|  }}{\beta\left(  \varepsilon\mu\right)
^{\frac{1}{4}}}\right)  \right. \\
&  \left.  +\frac{i\Theta_{\frac{1}{\beta}}(x)}{\beta\left(  \varepsilon
\mu\right)  ^{\frac{1}{4}}}\left(  1-\frac{ix}{\left|  x\right|  }\right)
\sqrt{\frac{t}{\left|  x\right|  }}J_{1}\left(  \frac{2\sqrt{t\left|
x\right|  }}{\beta\left(  \varepsilon\mu\right)  ^{\frac{1}{4}}}\right)
\right)  .
\end{align*}

Let us notice that $f$ fulfills the causality principle requirement which
guarantees that its convolution with the function from the right-hand side of
(\ref{Amax}) gives us the unique physically meaningful solution of the
inhomogeneous Maxwell system (\ref{Max1})-(\ref{Max3}) in a whole space.

\section{Inhomogeneous media}

Consider Maxwell's equations in a nonchiral inhomogeneous medium. Thus we
assume that $\varepsilon$ and $\mu$ are functions of coordinates:%
\[
\varepsilon=\varepsilon(x)\text{ \quad and\quad}\mu=\mu(x).
\]
Then the Maxwell system has the following form%
\begin{equation}
\operatorname{rot}\mathbf{H}=\varepsilon\partial_{t}\mathbf{E}+\mathbf{j,}
\label{Min1}%
\end{equation}%
\begin{equation}
\operatorname{rot}\mathbf{E}=-\mu\partial_{t}\mathbf{H}, \label{Min2}%
\end{equation}%
\begin{equation}
\operatorname{div}(\varepsilon\mathbf{E)}=\mathbf{\rho}, \label{Min3}%
\end{equation}%
\begin{equation}
\operatorname{div}\mathbf{(}\mu\mathbf{H)}=0. \label{Min4}%
\end{equation}
In this section following the procedure exposed in \cite{AQA} we show that
this system of equations can be written in the form of a single biquaternionic equation.

Equations (\ref{Min3}) and (\ref{Min4}) can be written as follows%
\[
\operatorname*{div}\mathbf{E}+<\frac{\operatorname{grad}\varepsilon
}{\varepsilon},\mathbf{E}>=\frac{\mathbf{\rho}}{\varepsilon}%
\]
and%
\[
\operatorname*{div}\mathbf{H}+<\frac{\operatorname{grad}\mu}{\mu}%
,\mathbf{H}>=0.
\]
Combining these equations with (\ref{Min1}) and (\ref{Min2}) we obtain the
Maxwell system in the form%
\begin{equation}
D\mathbf{E}=<\frac{\operatorname{grad}\varepsilon}{\varepsilon},\mathbf{E}%
>-\mu\partial_{t}\mathbf{H}-\frac{\mathbf{\rho}}{\varepsilon} \label{Min11}%
\end{equation}
and%
\begin{equation}
D\mathbf{H}=<\frac{\operatorname{grad}\mu}{\mu},\mathbf{H}>+\varepsilon
\partial_{t}\mathbf{E}+\mathbf{j}. \label{Min12}%
\end{equation}
Let us make a simple observation: the scalar product of two vectors
$\overrightarrow{p}$ and $\overrightarrow{q}$ can be written as follows%
\[
<\overrightarrow{p},\overrightarrow{q}>=-\frac{1}{2}(^{\overrightarrow{p}%
}M+M^{\overrightarrow{p}})\overrightarrow{q}.
\]
Using this fact, from (\ref{Min11}) and (\ref{Min12}) we obtain the pair of
equations%
\begin{equation}
(D+\frac{1}{2}\frac{\operatorname{grad}\varepsilon}{\varepsilon}%
)\mathbf{E}=-\frac{1}{2}M^{\frac{\operatorname{grad}\varepsilon}{\varepsilon}%
}\mathbf{E}-\mu\partial_{t}\mathbf{H}-\frac{\mathbf{\rho}}{\varepsilon}
\label{Min21}%
\end{equation}
and%
\begin{equation}
(D+\frac{1}{2}\frac{\operatorname{grad}\mu}{\mu})\mathbf{H}=-\frac{1}%
{2}M^{\frac{\operatorname{grad}\mu}{\mu}}\mathbf{H}+\varepsilon\partial
_{t}\mathbf{E}+\mathbf{j}. \label{Min22}%
\end{equation}
Note that
\[
\frac{1}{2}\frac{\operatorname{grad}\varepsilon}{\varepsilon}=\frac
{\operatorname{grad}\sqrt{\varepsilon}}{\sqrt{\varepsilon}}.
\]
Then (\ref{Min21}) can be rewritten in the following form%
\begin{equation}
\frac{1}{\sqrt{\varepsilon}}D(\sqrt{\varepsilon}\cdot\mathbf{E)}%
+\mathbf{E}\cdot\overrightarrow{\varepsilon}=-\mu\partial_{t}\mathbf{H}%
-\frac{\mathbf{\rho}}{\varepsilon}, \label{Min31}%
\end{equation}
where%
\[
\overrightarrow{\varepsilon}:=\frac{\operatorname{grad}\sqrt{\varepsilon}%
}{\sqrt{\varepsilon}}.
\]
Analogously, (\ref{Min22}) takes the form%
\begin{equation}
\frac{1}{\sqrt{\mu}}D(\sqrt{\mu}\cdot\mathbf{H)}+\mathbf{H}\cdot
\overrightarrow{\mu}=\varepsilon\partial_{t}\mathbf{E}+\mathbf{j,}
\label{Min32}%
\end{equation}
where%
\[
\overrightarrow{\mu}:=\frac{\operatorname{grad}\sqrt{\mu}}{\sqrt{\mu}}.
\]
Introducing the notations%
\[
\overrightarrow{\mathcal{E}}:=\sqrt{\varepsilon}\mathbf{E,\qquad
}\overrightarrow{\mathcal{H}}:=\sqrt{\mu}\mathbf{H,}%
\]
multiplying (\ref{Min31}) by $\sqrt{\varepsilon}$ and (\ref{Min32}) by
$\sqrt{\mu}$ we arrive at the equations%
\begin{equation}
(D+M^{\overrightarrow{\varepsilon}})\overrightarrow{\mathcal{E}}=-\frac{1}%
{c}\partial_{t}\overrightarrow{\mathcal{H}}-\frac{\mathbf{\rho}}%
{\sqrt{\varepsilon}}, \label{Minq1}%
\end{equation}
and%
\begin{equation}
(D+M^{\overrightarrow{\mu}})\overrightarrow{\mathcal{H}}=\frac{1}{c}%
\partial_{t}\overrightarrow{\mathcal{E}}+\sqrt{\mu}\mathbf{j}, \label{Minq2}%
\end{equation}
where as before $c=1/\sqrt{\varepsilon\mu}$ is the speed of propagation of
electromagnetic waves in the medium.

Equations (\ref{Minq1}) and (\ref{Minq2}) \ can be rewritten in an even more
elegant form. Consider the function%
\[
\mathbf{V}:=\overrightarrow{\mathcal{E}}+i\overrightarrow{\mathcal{H}}.
\]
Let us apply to it the biquaternionic operator
\[
\frac{1}{c}\partial_{t}+iD.
\]
We obtain%
\[
(\frac{1}{c}\partial_{t}+iD)\mathbf{V}=\frac{1}{c}\partial_{t}\overrightarrow
{\mathcal{E}}-D\overrightarrow{\mathcal{H}}+i(\frac{1}{c}\partial
_{t}\overrightarrow{\mathcal{H}}+D\overrightarrow{\mathcal{E}}).
\]
Applying (\ref{Minq2}) and (\ref{Minq1}) to the real and imaginary parts of
this equation gives\bigskip%
\begin{equation}
(\frac{1}{c}\partial_{t}+iD)\mathbf{V}=-i(M^{\overrightarrow{\varepsilon}%
}\overrightarrow{\mathcal{E}}+iM^{\overrightarrow{\mu}}\overrightarrow
{\mathcal{H}})-\sqrt{\mu}\mathbf{j}-\frac{i\mathbf{\rho}}{\sqrt{\varepsilon}}.
\label{vsp4111}%
\end{equation}
Note that
\[
\overrightarrow{\mathcal{E}}=\frac{1}{2}(\mathbf{V}+\mathbf{V}^{\ast}%
)\qquad\text{and\qquad}\overrightarrow{\mathcal{H}}=\frac{1}{2i}%
(\mathbf{V}-\mathbf{V}^{\ast}).
\]
Hence%
\[
M^{\overrightarrow{\varepsilon}}\overrightarrow{\mathcal{E}}%
+iM^{\overrightarrow{\mu}}\overrightarrow{\mathcal{H}}=\frac{1}{2}%
(M^{(\overrightarrow{\varepsilon}+\overrightarrow{\mu})}\mathbf{V+}%
M^{(\overrightarrow{\varepsilon}-\overrightarrow{\mu})}\mathbf{V}^{\ast}).
\]
Let us notice that
\[
\overrightarrow{\varepsilon}+\overrightarrow{\mu}=-\frac{\operatorname{grad}%
c}{c}\qquad\text{and\qquad}\overrightarrow{\varepsilon}-\overrightarrow{\mu
}=-\frac{\operatorname{grad}W}{W},
\]
where $W=\sqrt{\mu}/\sqrt{\varepsilon}$ is the intrinsic wave impedance of the
medium. Denote%
\begin{equation}
\overrightarrow{c}:=\frac{\operatorname{grad}\sqrt{c}}{\sqrt{c}}%
\qquad\text{and\qquad}\overrightarrow{W}:=\frac{\operatorname{grad}\sqrt{W}%
}{\sqrt{W}}. \label{definitionofcandw}%
\end{equation}
Then%
\[
M^{\overrightarrow{\varepsilon}}\overrightarrow{\mathcal{E}}%
+iM^{\overrightarrow{\mu}}\overrightarrow{\mathcal{H}}=-(M^{\overrightarrow
{c}}\mathbf{V+}M^{\overrightarrow{W}}\mathbf{V}^{\ast}).
\]
From (\ref{vsp4111}) we obtain the Maxwell equations for an inhomogeneous
medium in the following form%
\begin{equation}
(\frac{1}{c}\partial_{t}+iD)\mathbf{V-}M^{i\overrightarrow{c}}\mathbf{V-}%
M^{i\overrightarrow{W}}\mathbf{V}^{\ast}=-(\sqrt{\mu}\mathbf{j}+\frac
{i\mathbf{\rho}}{\sqrt{\varepsilon}}). \label{Maxmain}%
\end{equation}
This equation first obtained in \cite{KrISAAC}, \cite{AQA} is completely
equivalent to the Maxwell system (\ref{Min1})-(\ref{Min4}) and represents
Maxwell's equation for inhomogeneous media in a quaternionic form. We
formulate this as the following statement.

\begin{theorem}
Let $c=1/\sqrt{\varepsilon\mu}$, $\overrightarrow{c}$ and $\overrightarrow{W}$
be defined by (\ref{definitionofcandw}). Then two real-valued vectors
$\mathbf{E}$ and $\mathbf{H}$ are solutions of the system (\ref{Min1}%
)-(\ref{Min4}) iff the purely vectorial biquaternion $\mathbf{V=}%
\sqrt{\varepsilon}\mathbf{E}+i\sqrt{\mu}\mathbf{H}$ is a solution of
(\ref{Maxmain}).
\end{theorem}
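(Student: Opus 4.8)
The plan is to establish both implications at once by observing that every step linking the system (\ref{Min1})--(\ref{Min4}) to the single equation (\ref{Maxmain}) is an algebraic equivalence, granted the standing assumptions that $\varepsilon,\mu$ are nonvanishing and that $\mathbf{E},\mathbf{H}$ are real purely vectorial fields. First I would record the two elementary facts underlying the whole chain: the Leibniz rule $\operatorname{div}(\varepsilon\mathbf{E})=\varepsilon\operatorname{div}\mathbf{E}+<\operatorname{grad}\varepsilon,\mathbf{E}>$ (and its $\mu\mathbf{H}$ analogue), which lets me pass reversibly between (\ref{Min3}), (\ref{Min4}) and their normalized forms; and the quaternionic product identity $<\overrightarrow{p},\overrightarrow{q}>=-\frac{1}{2}({}^{\overrightarrow{p}}M+M^{\overrightarrow{p}})\overrightarrow{q}$, which follows from $\overrightarrow{p}\,\overrightarrow{q}=-<\overrightarrow{p},\overrightarrow{q}>+[\overrightarrow{p}\times\overrightarrow{q}]$. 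Since $\mathbf{E}$ is purely vectorial, $D\mathbf{E}=-\operatorname{div}\mathbf{E}+\operatorname{rot}\mathbf{E}$ splits uniquely into scalar and vector parts, so (\ref{Min11}) is equivalent to the conjunction of the normalized form of (\ref{Min3}) (its scalar part) and (\ref{Min2}) (its vector part); likewise (\ref{Min12}) encodes (\ref{Min4}) together with (\ref{Min1}). This is the step that fuses four scalar/vector equations into two quaternionic ones, and it is reversible precisely because of the uniqueness of the scalar--vector decomposition.

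Next I would follow the purely formal rewritings already displayed in the text: inserting the product identity turns (\ref{Min11}), (\ref{Min12}) into (\ref{Min21}), (\ref{Min22}); the observation $\frac{1}{2}\operatorname{grad}\varepsilon/\varepsilon=\operatorname{grad}\sqrt{\varepsilon}/\sqrt{\varepsilon}$ together with the invertible substitution $\overrightarrow{\mathcal{E}}=\sqrt{\varepsilon}\mathbf{E}$, $\overrightarrow{\mathcal{H}}=\sqrt{\mu}\mathbf{H}$ produces the symmetric pair (\ref{Minq1}), (\ref{Minq2}). The theorem is thereby reduced to the equivalence of (\ref{Minq1})--(\ref{Minq2}) with (\ref{Maxmain}) for $\mathbf{V}=\overrightarrow{\mathcal{E}}+i\overrightarrow{\mathcal{H}}$. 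Applying $\frac{1}{c}\partial_t+iD$ to $\mathbf{V}$ and substituting (\ref{Minq1}), (\ref{Minq2}) gives (\ref{vsp4111}), and rewriting its right-hand side via $\overrightarrow{\mathcal{E}}=\frac{1}{2}(\mathbf{V}+\mathbf{V}^{\ast})$, $\overrightarrow{\mathcal{H}}=\frac{1}{2i}(\mathbf{V}-\mathbf{V}^{\ast})$, the relations $\overrightarrow{\varepsilon}+\overrightarrow{\mu}=-\operatorname{grad}c/c$ and $\overrightarrow{\varepsilon}-\overrightarrow{\mu}=-\operatorname{grad}W/W$, and the definitions (\ref{definitionofcandw}), yields (\ref{Maxmain}). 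Along the way I would justify identities such as $iM^{\overrightarrow{c}}=M^{i\overrightarrow{c}}$ from the linearity of $\overrightarrow{q}\mapsto M^{\overrightarrow{p}}\overrightarrow{q}$ in the factor $\overrightarrow{p}$.

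The one place that genuinely uses the reality hypothesis, and which I expect to be the real crux, is the converse passage from the single equation (\ref{Maxmain}) back to the pair (\ref{Minq1}), (\ref{Minq2}). Here I would argue that, because $\mathbf{E},\mathbf{H}$ are real, the biquaternions $\overrightarrow{\mathcal{E}},\overrightarrow{\mathcal{H}}$ have real components, so $\mathbf{V}^{\ast}$ is truly the complex conjugate of $\mathbf{V}$ with respect to the unit $i$, and the operator $\frac{1}{c}\partial_t$ and the vectors $\overrightarrow{c},\overrightarrow{W}$ are real while $iD$ and the factors $i\overrightarrow{c},i\overrightarrow{W}$ carry the only explicit occurrences of $i$. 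Consequently (\ref{Maxmain}) separates unambiguously into its even and odd parts under the involution $i\mapsto-i$, and these two parts reproduce exactly (\ref{Minq1}) and (\ref{Minq2}); this separation is what legitimizes passing from one equation to two. Were $\mathbf{E},\mathbf{H}$ not real, the splitting would fail, since an arbitrary purely vectorial $\mathbf{V}$ and an independent $\mathbf{V}^{\ast}$ could not be recovered from a single relation. Reading the established equivalences in reverse then returns (\ref{Min1})--(\ref{Min4}), which completes the argument.
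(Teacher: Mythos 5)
Your proposal is correct and follows essentially the same route as the paper, whose ``proof'' is precisely the chain of rewritings (\ref{Min11})--(\ref{Min12}) $\to$ (\ref{Min21})--(\ref{Min22}) $\to$ (\ref{Minq1})--(\ref{Minq2}) $\to$ (\ref{vsp4111}) $\to$ (\ref{Maxmain}) displayed before the theorem statement. Your added attention to the reversibility of each step --- the uniqueness of the scalar--vector splitting of $D\mathbf{E}$, $D\mathbf{H}$, and the separation of (\ref{Maxmain}) into real and imaginary parts with respect to $i$ (legitimate because $\overrightarrow{\mathcal{E}},\overrightarrow{\mathcal{H}},\overrightarrow{c},\overrightarrow{W},\rho,\mathbf{j}$ are real) --- is exactly what the ``iff'' requires and is left implicit in the paper.
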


Note that if $\varepsilon$ and $\mu$ are constant (a homogeneous medium),
equation (\ref{Maxmain}) turns into a well known (at least since the work of
C. Lanczos \cite{LanczosPhD}) biquaternionic reformulation of the Maxwell
system in a vacuum which was rediscovered by many researchers (e.g.,
\cite{Imaeda} and comments in \cite{GsponerHurni04}).

\begin{remark}
Equation (\ref{Maxmain}) can be considered as a generalization of the Vekua
equation, well known in complex analysis, that describes generalized analytic
functions \cite{Vekua}. Recently in \cite{Mal98} using the L. Bers approach
\cite{Berskniga} another generalization of the Vekua equation was considered.
Most likely some of the interesting results discussed in \cite{Mal98} can be
obtained for (\ref{Maxmain}) also. Their physical meaning would be of great interest.
\end{remark}

\subsection{Static case and factorization of the Schr\"{o}dinger
operator\label{Static}}

When the vectors of the electromagnetic field do not depend on time from
(\ref{Minq1}) and (\ref{Minq2}) we obtain two independent equations%
\[
(D+M^{\overrightarrow{\varepsilon}})\overrightarrow{\mathcal{E}}%
=-\frac{\mathbf{\rho}}{\sqrt{\varepsilon}}%
\]
and%
\[
(D+M^{\overrightarrow{\mu}})\overrightarrow{\mathcal{H}}=\sqrt{\mu}\mathbf{j.}%
\]

Let us consider the sourceless situation, that is we are interested in the
solutions for the operator $D+M^{\overrightarrow{\alpha}}$, where the complex
quaternion $\overrightarrow{\alpha}$ represents $\overrightarrow{\varepsilon}$
or $\overrightarrow{\mu}$ and has the form
\[
\overrightarrow{\alpha}=\frac{\operatorname{grad}f}{f}.
\]
The scalar function $f$ is different from zero.

Note that the study of the operator $D+^{\overrightarrow{\alpha}}M$
practically reduces to that of $D$, as shown in \cite{Sproessig}. In the case
of the operator $D+M^{\overrightarrow{\alpha}}$ (which can be called the
static Maxwell operator) the situation is quite different.

Consider the equation%

\begin{equation}
\left(  -\Delta+\nu\right)  g=0\qquad\text{in }\Omega\label{Schr3}%
\end{equation}
where $\nu$ and $g$ are complex valued functions, and $\Omega$ is a domain in
$\mathbb{R}^{3}$. We assume that $g$ is twice continuously differentiable.

\begin{theorem}
\label{Fact3} \cite{KrJPhys06} Let $f$ be a nonvanishing particular solution
of (\ref{Schr3}). Then for any scalar twice continuously differentiable
function $g$ the following equality holds,%
\begin{equation}
(D+M^{\frac{Df}{f}})(D-M^{\frac{Df}{f}})g=\left(  -\Delta+\nu\right)  g.
\label{FactSchr3}%
\end{equation}

\end{theorem}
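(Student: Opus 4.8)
The plan is to verify the operator identity (\ref{FactSchr3}) by direct computation, exploiting two features of the problem: that $g$ is a \emph{scalar} function, and that the purely vectorial biquaternion
\[
\overrightarrow{\alpha}:=\frac{Df}{f}=\frac{\operatorname*{grad}f}{f}
\]
is a logarithmic gradient, hence curl-free. Recalling that $M^{\overrightarrow{\alpha}}$ denotes right multiplication by $\overrightarrow{\alpha}$ (as fixed by the scalar-product identity recorded earlier), I first apply the inner factor to $g$. Since $Dg=\operatorname*{grad}g$ and $M^{\overrightarrow{\alpha}}g=g\overrightarrow{\alpha}$, this produces the purely vectorial quantity
\[
w:=(D-M^{\frac{Df}{f}})g=\operatorname*{grad}g-g\overrightarrow{\alpha}.
\]

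Next I apply the outer factor, writing $(D+M^{\overrightarrow{\alpha}})w=Dw+w\overrightarrow{\alpha}$ and splitting into scalar and vector parts via $Dw=-\operatorname*{div}w+\operatorname*{rot}w$ (legitimate because $w$ has no scalar part) and the vector product rule $w\overrightarrow{\alpha}=-<w,\overrightarrow{\alpha}>+[w\times\overrightarrow{\alpha}]$. The vector part is $\operatorname*{rot}w+[w\times\overrightarrow{\alpha}]$, and here the two structural facts do the work: from $\operatorname*{rot}\operatorname*{grad}g=0$ together with $\operatorname*{rot}\overrightarrow{\alpha}=0$ (the curl of a logarithmic gradient vanishes since $\operatorname*{grad}f\times\operatorname*{grad}f=0$) I get $\operatorname*{rot}w=-[\operatorname*{grad}g\times\overrightarrow{\alpha}]$, while $\overrightarrow{\alpha}\times\overrightarrow{\alpha}=0$ gives $[w\times\overrightarrow{\alpha}]=[\operatorname*{grad}g\times\overrightarrow{\alpha}]$. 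These cancel exactly, so the output is purely scalar — which is already the structurally important part of the claim.

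It then remains to collect the scalar part $-\operatorname*{div}w-<w,\overrightarrow{\alpha}>$. Expanding $\operatorname*{div}w=\Delta g-<\operatorname*{grad}g,\overrightarrow{\alpha}>-g\operatorname*{div}\overrightarrow{\alpha}$ and $<w,\overrightarrow{\alpha}>=<\operatorname*{grad}g,\overrightarrow{\alpha}>-g<\overrightarrow{\alpha},\overrightarrow{\alpha}>$ (with $<\cdot,\cdot>$ the bilinear extension of the scalar product), the two terms containing $<\operatorname*{grad}g,\overrightarrow{\alpha}>$ cancel and I am left with
\[
-\Delta g+g\left(\operatorname*{div}\overrightarrow{\alpha}+<\overrightarrow{\alpha},\overrightarrow{\alpha}>\right).
\]

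The crux — the one step beyond bookkeeping — is the Riccati-type identity
\[
\operatorname*{div}\overrightarrow{\alpha}+<\overrightarrow{\alpha},\overrightarrow{\alpha}>=\left(\frac{\Delta f}{f}-\frac{<\operatorname*{grad}f,\operatorname*{grad}f>}{f^{2}}\right)+\frac{<\operatorname*{grad}f,\operatorname*{grad}f>}{f^{2}}=\frac{\Delta f}{f},
\]
after which the hypothesis that $f$ solves (\ref{Schr3}), i.e.\ $\Delta f=\nu f$, forces $\operatorname*{div}\overrightarrow{\alpha}+<\overrightarrow{\alpha},\overrightarrow{\alpha}>=\nu$ and yields $-\Delta g+\nu g=(-\Delta+\nu)g$, as required. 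I expect the main obstacle to be purely organizational: keeping the scalar and vector parts of the biquaternionic products straight and ensuring every first-order cross term cancels; the emergence of $\nu$ is then automatic from the Riccati identity combined with the Schrödinger equation for $f$.
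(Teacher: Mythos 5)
The paper itself gives no proof of Theorem \ref{Fact3}; it is stated with only a citation to \cite{KrJPhys06}, so there is nothing internal to compare against. Your direct verification is correct and complete, and it is essentially the computation one would expect (and the one carried out in the cited reference). All three load-bearing points are handled properly: (i) the vector part of the output vanishes because $\operatorname{rot}\overrightarrow{\alpha}=\operatorname{rot}(\operatorname{grad}f/f)=-f^{-2}[\operatorname{grad}f\times\operatorname{grad}f]=0$ and the two remaining cross-product terms $\mp[\operatorname{grad}g\times\overrightarrow{\alpha}]$ cancel, so the factorized operator really does send scalars to scalars; (ii) the first-order terms $\pm<\operatorname{grad}g,\overrightarrow{\alpha}>$ cancel in the scalar part; (iii) the Riccati identity $\operatorname{div}\overrightarrow{\alpha}+<\overrightarrow{\alpha},\overrightarrow{\alpha}>=\Delta f/f$ combined with $\Delta f=\nu f$ produces the potential. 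You also correctly read $M^{\overrightarrow{\alpha}}$ as right multiplication (consistent with the paper's identity $<\overrightarrow{p},\overrightarrow{q}>=-\tfrac12({}^{\overrightarrow{p}}M+M^{\overrightarrow{p}})\overrightarrow{q}$) and correctly note that $<\cdot,\cdot>$ must be the bilinear, not sesquilinear, form since $f$ and $\nu$ are complex valued. No gaps.
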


\begin{remark}
The factorization (\ref{FactSchr3}) was obtained in \cite{Swansolo},
\cite{Swan} in a form which required a solution of an associated
biquaternionic Riccati equation. In \cite{KKW} it was shown that the solution
has necessarily the form $Df/f$ with $f$ being a solution of (\ref{Schr3}).
\end{remark}

\begin{remark}
Theorem \ref{Fact3} generalizes theorem 21 from \cite{KrRecentDevelopments}.
\end{remark}

\begin{remark}
\label{RemFromSchrToFirst}As $g$ in (\ref{FactSchr3}) is a scalar function,
the factorization of the Schr\"{o}dinger operator can be written in the
following form%
\[
(D+M^{\frac{Df}{f}})fD(f^{-1}g)=\left(  -\Delta+\nu\right)  g,
\]
from which it is obvious that if $g$ is a solution of (\ref{Schr3}) then the
vector $\mathbf{F}=fD(f^{-1}g)$ is a solution of the equation%
\begin{equation}
(D+M^{\frac{Df}{f}})\mathbf{F}=0\qquad\text{in }\Omega. \label{D+MDf}%
\end{equation}
The inverse result is given by the next statement where the following notation
is used%
\[
\mathcal{A}[\mathbf{G}](x,y,z)=%
%TCIMACRO{\dint \limits_{x_{0}}^{x}}%
%BeginExpansion
{\displaystyle\int\limits_{x_{0}}^{x}}
%EndExpansion
G_{1}(\xi,y_{0},z_{0})d\xi+%
%TCIMACRO{\dint \limits_{y_{0}}^{y}}%
%BeginExpansion
{\displaystyle\int\limits_{y_{0}}^{y}}
%EndExpansion
G_{2}(x,\zeta,z_{0})d\zeta+%
%TCIMACRO{\dint \limits_{z_{0}}^{z}}%
%BeginExpansion
{\displaystyle\int\limits_{z_{0}}^{z}}
%EndExpansion
G_{3}(x,y,\eta)d\eta+C
\]
($C$ is an arbitrary complex constant).
\end{remark}

\begin{theorem}
\label{ThFromDiracToSchr}\cite{KrJPhys06} Let $\mathbf{F}$ be a solution of
(\ref{D+MDf}) in a simply connected domain $\Omega$. Then $g=f\mathcal{A}%
[f^{-1}\mathbf{F}]$ is a solution of (\ref{Schr3}).
\end{theorem}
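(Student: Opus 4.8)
\noindent\emph{Proof proposal.} The plan is to invert the construction of Remark~\ref{RemFromSchrToFirst}: I will show that under the hypothesis the field $f^{-1}\mathbf{F}$ is a gradient, so that $\mathcal{A}$ produces a scalar potential, and then feed the resulting $g$ back into the factorization of Theorem~\ref{Fact3}. Since $f$ is scalar we have $Df=\operatorname{grad}f$, so the coefficient appearing in (\ref{D+MDf}) is the purely vectorial quaternion $\overrightarrow{\alpha}:=Df/f=\operatorname{grad}f/f$. First I would resolve (\ref{D+MDf}) into scalar and vector parts. Because $\mathbf{F}$ is purely vectorial, $D\mathbf{F}=-\operatorname{div}\mathbf{F}+\operatorname{rot}\mathbf{F}$ and $M^{\overrightarrow{\alpha}}\mathbf{F}=\mathbf{F}\,\overrightarrow{\alpha}=-\left\langle \mathbf{F},\overrightarrow{\alpha}\right\rangle +\mathbf{F}\times\overrightarrow{\alpha}$, whence (\ref{D+MDf}) is equivalent to the two real conditions
\[
\operatorname{div}\mathbf{F}+\left\langle \mathbf{F},\overrightarrow{\alpha}\right\rangle =0\qquad\text{and}\qquad\operatorname{rot}\mathbf{F}=\overrightarrow{\alpha}\times\mathbf{F}.
\]

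The hard part is the second, vector, equation, which is exactly where the first--order structure of the operator is consumed. From it I would compute, using the identity $\operatorname{rot}(f^{-1}\mathbf{F})=f^{-1}\operatorname{rot}\mathbf{F}+\operatorname{grad}(f^{-1})\times\mathbf{F}$ together with $\operatorname{grad}(f^{-1})=-f^{-1}\overrightarrow{\alpha}$,
\[
\operatorname{rot}(f^{-1}\mathbf{F})=f^{-1}\bigl(\overrightarrow{\alpha}\times\mathbf{F}\bigr)-f^{-1}\bigl(\overrightarrow{\alpha}\times\mathbf{F}\bigr)=0\qquad\text{in }\Omega .
\]
Thus $f^{-1}\mathbf{F}$ is irrotational, which is the only genuinely nontrivial ingredient in the argument.

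Since $\Omega$ is simply connected, an irrotational field has a single--valued scalar potential, and the operator $\mathcal{A}$ writes it down explicitly: differentiating the defining formula for $\mathcal{A}$ and using $\operatorname{rot}(f^{-1}\mathbf{F})=0$ to replace the mixed partials term by term shows $\operatorname{grad}\mathcal{A}[f^{-1}\mathbf{F}]=f^{-1}\mathbf{F}$, i.e. $D\mathcal{A}[f^{-1}\mathbf{F}]=f^{-1}\mathbf{F}$. Put $g:=f\mathcal{A}[f^{-1}\mathbf{F}]$, so that $f^{-1}g=\mathcal{A}[f^{-1}\mathbf{F}]$ and consequently
\[
fD(f^{-1}g)=fD\bigl(\mathcal{A}[f^{-1}\mathbf{F}]\bigr)=f\cdot f^{-1}\mathbf{F}=\mathbf{F}.
\]
Here I would also record the regularity bookkeeping needed to apply Theorem~\ref{Fact3}: $f^{-1}\mathbf{F}$ is $C^{1}$, hence its potential $\mathcal{A}[f^{-1}\mathbf{F}]$ is $C^{2}$, hence $g=f\mathcal{A}[f^{-1}\mathbf{F}]$ is twice continuously differentiable.

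Finally I would invoke the factorization in the form given in Remark~\ref{RemFromSchrToFirst}, which holds for every scalar $C^{2}$ function: $(D+M^{\frac{Df}{f}})fD(f^{-1}g)=(-\Delta+\nu)g$. Its left--hand side equals $(D+M^{\frac{Df}{f}})\mathbf{F}$, which vanishes by hypothesis, so $(-\Delta+\nu)g=0$ and $g$ solves (\ref{Schr3}). As a consistency check, the factored product is purely scalar, so the vanishing of the vector part of (\ref{D+MDf}) is automatic, while the scalar part of (\ref{D+MDf}) reads precisely $(-\Delta+\nu)g=0$; everything beyond the irrotationality step is the antiderivative property of $\mathcal{A}$ and a direct appeal to Theorem~\ref{Fact3}.
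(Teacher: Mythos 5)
Your argument is correct: the paper states this theorem without proof (citing [KrJPhys06]), but the route you take is exactly the one the surrounding text sets up, namely inverting Remark~\ref{RemFromSchrToFirst} by extracting the vector part of (\ref{D+MDf}) to show $\operatorname{rot}(f^{-1}\mathbf{F})=0$, using simple connectedness so that $\mathcal{A}$ furnishes a potential with $fD(f^{-1}g)=\mathbf{F}$, and then reading off $(-\Delta+\nu)g=0$ from the factorization (\ref{FactSchr3}). All the individual computations (the scalar/vector split of $D+M^{\overrightarrow{\alpha}}$, the product-rule cancellation in $\operatorname{rot}(f^{-1}\mathbf{F})$, and the antiderivative property of $\mathcal{A}$ for irrotational fields) check out.
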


Moreover, a factorization of the operator $\operatorname{div}%
p\operatorname{grad}+q$ is valid also.

\begin{theorem}
\cite{KrJPhys06} Let $u_{0}$ be a nonvanishing particular solution of the
equation%
\begin{equation}
(\operatorname{div}p\operatorname{grad}+q)u=0\text{\qquad in }\Omega
\subset\mathbb{R}^{3} \label{maineq3}%
\end{equation}
with $p$, $q$ and $u$ being complex valued functions, $p\in C^{2}(\Omega)$ and
$p\neq0$ in $\Omega$. Then for any scalar function $\varphi\in C^{2}(\Omega)$
the following equality holds%
\begin{equation}
(\operatorname{div}p\operatorname{grad}+q)\varphi=-p^{1/2}(D+M^{\frac{Df}{f}%
})(D-M^{\frac{Df}{f}})p^{1/2}\varphi\label{mainfact3}%
\end{equation}
where $f=p^{1/2}u_{0}$.
\end{theorem}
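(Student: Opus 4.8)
The plan is to deduce this factorization from the Schrödinger factorization already established in Theorem \ref{Fact3}, by means of a Liouville-type change of dependent variable that turns $\operatorname{div}p\operatorname{grad}+q$ into a Schrödinger operator $-\Delta+\nu$. The whole argument hinges on choosing the potential $\nu$ so that two things happen at once: the conjugated operator $-p^{1/2}(-\Delta+\nu)p^{1/2}$ reproduces $\operatorname{div}p\operatorname{grad}+q$, and the given solution $u_{0}$ is transported to a nonvanishing solution $f=p^{1/2}u_{0}$ of $(-\Delta+\nu)f=0$, which is exactly the datum Theorem \ref{Fact3} requires.

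First I would establish the purely scalar identity
\begin{equation}
(\operatorname{div}p\operatorname{grad}+q)\varphi=-p^{1/2}(-\Delta+\nu)(p^{1/2}\varphi)\qquad\text{with}\qquad\nu=\frac{\Delta(p^{1/2})}{p^{1/2}}-\frac{q}{p}, \label{liouville}
\end{equation}
valid for every $\varphi\in C^{2}(\Omega)$. This is the core computation. Expanding $\Delta(p^{1/2}\varphi)$ by the product rule gives $p^{1/2}\Delta\varphi+2\operatorname{grad}(p^{1/2})\cdot\operatorname{grad}\varphi+\varphi\Delta(p^{1/2})$; multiplying by $p^{1/2}$ and using the cancellation $2p^{1/2}\operatorname{grad}(p^{1/2})=\operatorname{grad}p$ collapses the first two terms into $p\Delta\varphi+\operatorname{grad}p\cdot\operatorname{grad}\varphi=\operatorname{div}(p\operatorname{grad}\varphi)$. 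The remaining term $p^{1/2}\varphi\,\Delta(p^{1/2})$, combined with the $-p\nu\varphi$ coming from the potential, reduces to $q\varphi$ exactly when $\nu$ is defined as in (\ref{liouville}). This product-rule bookkeeping, and in particular spotting the cancellation that regenerates the divergence form, is the main (and essentially the only) obstacle; everything afterwards is assembly.

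Next I would check that $f=p^{1/2}u_{0}$ meets the hypotheses of Theorem \ref{Fact3} for this same $\nu$. It is nonvanishing because $p\neq0$ and $u_{0}$ is nonvanishing, and it is sufficiently smooth because $p\in C^{2}$ and $u_{0}$ is a classical solution. That it solves the Schrödinger equation is immediate from (\ref{liouville}): setting $\varphi=u_{0}$ there yields $(\operatorname{div}p\operatorname{grad}+q)u_{0}=-p^{1/2}(-\Delta+\nu)f$, and since the left-hand side vanishes by assumption and $p^{1/2}\neq0$, we obtain $(-\Delta+\nu)f=0$. Thus the very $\nu$ that makes (\ref{liouville}) hold automatically makes $f$ an admissible particular solution for Theorem \ref{Fact3}; this self-consistency is the pleasant feature that makes the reduction work.

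Finally I would assemble the result. Applying Theorem \ref{Fact3} with this $f$ and the scalar function $g=p^{1/2}\varphi$ gives
\[
(D+M^{\frac{Df}{f}})(D-M^{\frac{Df}{f}})(p^{1/2}\varphi)=(-\Delta+\nu)(p^{1/2}\varphi).
\]
Multiplying on the left by $-p^{1/2}$ and invoking (\ref{liouville}) to rewrite the right-hand side as $(\operatorname{div}p\operatorname{grad}+q)\varphi$ produces exactly (\ref{mainfact3}). I would only take care to note that the $f=p^{1/2}u_{0}$ used here is the same $f$ named in the statement, so that the logarithmic-derivative multiplier $M^{\frac{Df}{f}}$ appearing in the two factors coincides with the one in (\ref{mainfact3}).
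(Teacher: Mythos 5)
Your proposal is correct. The paper itself states this theorem without proof (it only cites \cite{KrJPhys06}), but your argument --- the Liouville-type identity $(\operatorname{div}p\operatorname{grad}+q)\varphi=-p^{1/2}(-\Delta+\nu)(p^{1/2}\varphi)$ with $\nu=\Delta(p^{1/2})/p^{1/2}-q/p$, the observation that this same $\nu$ makes $f=p^{1/2}u_{0}$ a nonvanishing solution of $(-\Delta+\nu)f=0$, and the subsequent application of Theorem \ref{Fact3} to $g=p^{1/2}\varphi$ --- is exactly the reduction the structure of the statement is built around, and your product-rule computation checks out.
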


Thus, if $u$ is a solution of equation (\ref{maineq3}) then
\[
\mathbf{F}=fD(f^{-1}p^{1/2}u)=fD(u_{0}^{-1}u)
\]
is a solution of equation (\ref{D+MDf}) (see remark \ref{RemFromSchrToFirst}).
The inverse result has the following form.

\begin{theorem}
\cite{KrJPhys06} Let $\mathbf{F}$ be a solution of equation (\ref{D+MDf}) in a
simply connected domain $\Omega,$ where $f=p^{1/2}u_{0}$ and $u_{0}$ be a
nonvanishing particular solution of (\ref{maineq3}). Then
\[
u=u_{0}\mathcal{A}[f^{-1}\mathbf{F}]
\]
is a solution of (\ref{maineq3}).
\end{theorem}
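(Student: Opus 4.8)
The plan is to reduce the general operator $\operatorname{div}p\operatorname{grad}+q$ to the stationary Schr\"{o}dinger operator $-\Delta+\nu$ by the Liouville-type substitution $g=p^{1/2}u$, and then to invoke Theorem \ref{ThFromDiracToSchr}, which already settles the Schr\"{o}dinger case. First I would set $f=p^{1/2}u_{0}$ and define $\nu:=\Delta f/f$. Since $p\neq0$ in $\Omega$ and $u_{0}$ is a nonvanishing solution of (\ref{maineq3}), the function $f$ is nonvanishing and satisfies $(-\Delta+\nu)f=0$ by construction; thus $f$ is a nonvanishing particular solution of (\ref{Schr3}) for this $\nu$, so Theorem \ref{Fact3} applies and the factorization (\ref{FactSchr3}) holds with this $f$ and $\nu$.

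Composing (\ref{FactSchr3}) with the factorization (\ref{mainfact3}) of the preceding theorem --- both built from the same operator $M^{\frac{Df}{f}}$ --- yields the Liouville-type identity
\[
(\operatorname{div}p\operatorname{grad}+q)\varphi=-p^{1/2}(-\Delta+\nu)(p^{1/2}\varphi)
\]
valid for every scalar $\varphi\in C^{2}(\Omega)$. Next I would apply Theorem \ref{ThFromDiracToSchr} with this $f$ and $\nu$: because $\mathbf{F}$ solves (\ref{D+MDf}) and $f$ is a nonvanishing solution of (\ref{Schr3}), that theorem gives that $g:=f\mathcal{A}[f^{-1}\mathbf{F}]$ solves (\ref{Schr3}), i.e. $(-\Delta+\nu)g=0$.

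Finally I would relate $g$ to the claimed $u$. Since $f=p^{1/2}u_{0}$ gives $u_{0}=p^{-1/2}f$, we have
\[
u=u_{0}\mathcal{A}[f^{-1}\mathbf{F}]=p^{-1/2}f\mathcal{A}[f^{-1}\mathbf{F}]=p^{-1/2}g,
\]
so $p^{1/2}u=g$. Substituting $\varphi=u$ into the Liouville-type identity and using $(-\Delta+\nu)g=0$ then gives
\[
(\operatorname{div}p\operatorname{grad}+q)u=-p^{1/2}(-\Delta+\nu)g=0,
\]
which is precisely (\ref{maineq3}), as required.

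The hard part is not any computation but checking that $f=p^{1/2}u_{0}$ genuinely plays the role of the Schr\"{o}dinger solution required by Theorems \ref{Fact3} and \ref{ThFromDiracToSchr}, so that the potential $\nu$ appearing there is exactly $\nu=\Delta f/f$. This is secured the moment one notes that $f$ is nonvanishing and solves $(-\Delta+\nu)f=0$ by the definition of $\nu$; no explicit expression for $\nu$ in terms of $p$ and $q$ is needed, because the two given factorizations (\ref{FactSchr3}) and (\ref{mainfact3}) are stated for the identical second-order operator and therefore compose without any further hypotheses.
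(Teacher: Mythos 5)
Your proof is correct and follows the route the paper's own exposition sets up: the observation that $f=p^{1/2}u_{0}$ is a nonvanishing solution of (\ref{Schr3}) with $\nu=\Delta f/f$, the composition of (\ref{mainfact3}) with (\ref{FactSchr3}) into the Liouville-type identity, and the application of Theorem \ref{ThFromDiracToSchr} are exactly the reduction the paper uses for the forward direction and implicitly for this inverse statement (which it cites from \cite{KrJPhys06} without reproducing the proof). The key verification --- that $f$ qualifies as the Schr\"{o}dinger solution required by Theorems \ref{Fact3} and \ref{ThFromDiracToSchr} --- is handled correctly.
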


Notice that due to the fact that in (\ref{mainfact3}) $\varphi$ is scalar, we
can rewrite the equality in the form%
\[
(\operatorname{div}p\operatorname{grad}+q)\varphi=-p^{1/2}(D+M^{\frac{Df}{f}%
})(D-\frac{Df}{f}C_{H})p^{1/2}\varphi.
\]
Now, consider the equation
\begin{equation}
(D-\frac{Df}{f}C_{H})W=0, \label{Vekuamain3}%
\end{equation}
where $W$ is an $\mathbb{H}(\mathbb{C})$-valued function. Equation
(\ref{Vekuamain3}) is a direct generalization of the main Vekua equation
considered in \cite{KrRecentDevelopments}. Moreover, we show that it preserves
some of its important properties.

\begin{theorem}
\cite{KrJPhys06} Let $W=W_{0}+\mathbf{W}$ be a solution of (\ref{Vekuamain3}).
Then $W_{0}$ is a solution of the stationary Schr\"{o}dinger equation
\begin{equation}
-\Delta W_{0}+\nu W_{0}=0, \label{eqSchr3}%
\end{equation}
where $\nu=\Delta f/f$. Moreover, the function $u=f^{-1}W_{0}$ is a solution
of the equation
\begin{equation}
\operatorname{div}(f^{2}\operatorname{grad}u)=0 \label{eqscpart}%
\end{equation}
and the vector function $\mathbf{v}=f\mathbf{W}$ is a solution of the
equation
\begin{equation}
\operatorname{rot}(f^{-2}\operatorname{rot}\mathbf{v})=0. \label{eqvecpart}%
\end{equation}

\end{theorem}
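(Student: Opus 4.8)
The plan is to write the main Vekua equation (\ref{Vekuamain3}) explicitly in terms of the scalar and vector parts of $W = W_0 + \mathbf{W}$ and then read off the three assertions by elementary vector calculus. Throughout I set $\overrightarrow{\alpha} = Df/f = \operatorname{grad} f / f$, which is purely vectorial because $f$ is scalar. Two properties of $\overrightarrow{\alpha}$ do all the work: first, $\overrightarrow{\alpha}$ is curl-free, $\operatorname{rot}\overrightarrow{\alpha} = \operatorname{rot}\operatorname{grad}(\ln f) = 0$; and second, since $\nu = \Delta f/f$, a short computation gives $\nu = \operatorname{div}\overrightarrow{\alpha} + \langle\overrightarrow{\alpha},\overrightarrow{\alpha}\rangle$, where $\langle\cdot,\cdot\rangle$ is the (bilinear) scalar product. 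Since $C_H W = W_0 - \mathbf{W}$, the quaternionic product $\overrightarrow{\alpha}\,(W_0-\mathbf{W})$ has scalar part $\langle\overrightarrow{\alpha},\mathbf{W}\rangle$ and vector part $W_0\overrightarrow{\alpha} - [\overrightarrow{\alpha}\times\mathbf{W}]$, whereas $DW$ has scalar part $-\operatorname{div}\mathbf{W}$ and vector part $\operatorname{grad} W_0 + \operatorname{rot}\mathbf{W}$. Hence (\ref{Vekuamain3}) is equivalent to the pair
\begin{align*}
-\operatorname{div}\mathbf{W} &= \langle\overrightarrow{\alpha},\mathbf{W}\rangle, \\
\operatorname{grad} W_0 + \operatorname{rot}\mathbf{W} &= W_0\overrightarrow{\alpha} - [\overrightarrow{\alpha}\times\mathbf{W}],
\end{align*}
which I will call (S) and (V).

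To obtain (\ref{eqSchr3}) I would take the divergence of (V). Using $\operatorname{div}\operatorname{rot}=0$ and $\operatorname{rot}\overrightarrow{\alpha}=0$ this yields $\Delta W_0 = \langle\overrightarrow{\alpha},\operatorname{grad} W_0\rangle + W_0\operatorname{div}\overrightarrow{\alpha} + \langle\overrightarrow{\alpha},\operatorname{rot}\mathbf{W}\rangle$. Separately, taking the scalar product of (V) with $\overrightarrow{\alpha}$ and using the vanishing of the triple product $\langle\overrightarrow{\alpha},[\overrightarrow{\alpha}\times\mathbf{W}]\rangle = 0$ gives $\langle\overrightarrow{\alpha},\operatorname{grad} W_0\rangle + \langle\overrightarrow{\alpha},\operatorname{rot}\mathbf{W}\rangle = W_0\langle\overrightarrow{\alpha},\overrightarrow{\alpha}\rangle$. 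Substituting the latter identity into the former collapses everything to $\Delta W_0 = W_0(\operatorname{div}\overrightarrow{\alpha} + \langle\overrightarrow{\alpha},\overrightarrow{\alpha}\rangle) = \nu W_0$, which is exactly (\ref{eqSchr3}).

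For (\ref{eqscpart}) I would simply expand $\operatorname{div}(f^2\operatorname{grad}(f^{-1}W_0))$: the two cross terms $\langle\operatorname{grad} f,\operatorname{grad} W_0\rangle$ cancel and one is left with $f(\Delta W_0 - \nu W_0)$, which vanishes by the part just proved; thus (\ref{eqscpart}) is in fact equivalent to (\ref{eqSchr3}) because $f\neq 0$. For (\ref{eqvecpart}), I would compute $\operatorname{rot}\mathbf{v} = \operatorname{rot}(f\mathbf{W}) = f\bigl(\operatorname{rot}\mathbf{W} + [\overrightarrow{\alpha}\times\mathbf{W}]\bigr)$ using $\operatorname{grad} f = f\overrightarrow{\alpha}$, then substitute $\operatorname{rot}\mathbf{W} + [\overrightarrow{\alpha}\times\mathbf{W}] = W_0\overrightarrow{\alpha} - \operatorname{grad} W_0$ from (V). Dividing by $f^2$ and recognizing $f^{-1}(W_0\overrightarrow{\alpha} - \operatorname{grad} W_0) = -\operatorname{grad}(f^{-1}W_0)$ gives $f^{-2}\operatorname{rot}\mathbf{v} = -\operatorname{grad} u$, whence $\operatorname{rot}(f^{-2}\operatorname{rot}\mathbf{v}) = -\operatorname{rot}\operatorname{grad} u = 0$.

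The computations are all routine; the only genuine care is the bookkeeping in the scalar/vector splitting of the quaternionic products. The two identities that make the argument close cleanly are $\operatorname{rot}\overrightarrow{\alpha}=0$ and the orthogonality $\langle\overrightarrow{\alpha},[\overrightarrow{\alpha}\times\mathbf{W}]\rangle=0$; without them the divergence and scalar-product steps would not telescope to $\nu W_0$. I expect the main (minor) obstacle to be keeping the order of quaternionic multiplication straight, namely confirming that $\frac{Df}{f}C_H$ acts as left multiplication by $\overrightarrow{\alpha}$ after conjugation, so that the vector part of $\overrightarrow{\alpha}\,\overline{W}$ is indeed $W_0\overrightarrow{\alpha} - [\overrightarrow{\alpha}\times\mathbf{W}]$ with this sign — precisely the term that produces (\ref{eqvecpart}) with the correct weight $f^{-2}$.
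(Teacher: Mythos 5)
Your argument is correct: splitting $(D-\frac{Df}{f}C_{H})W=0$ into its scalar and vector parts, and then using $\operatorname{rot}(\operatorname{grad}f/f)=0$ together with $\nu=\operatorname{div}\overrightarrow{\alpha}+\langle\overrightarrow{\alpha},\overrightarrow{\alpha}\rangle$, does yield all three equations (\ref{eqSchr3}), (\ref{eqscpart}) and (\ref{eqvecpart}), and your sign bookkeeping for $\operatorname{Sc}$ and $\operatorname{Vec}$ of $\overrightarrow{\alpha}\,\overline{W}$ is consistent with the paper's conventions for the quaternionic product and for $D$. The paper itself states this theorem without proof (citing \cite{KrJPhys06}), and your scalar/vector decomposition is exactly the standard route taken in that source, so there is nothing to correct.
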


\begin{remark}
\cite{KrJPhys06} Observe that the functions
\[
F_{0}=f,\quad F_{1}=\frac{i_{1}}{f},\quad F_{2}=\frac{i_{2}}{f},\quad
F_{3}=\frac{i_{3}}{f}%
\]
give us a generating quartet for the equation (\ref{Vekuamain3}). They are
solutions of (\ref{Vekuamain3}) and obviously any $\mathbb{H}(\mathbb{C}%
)$-valued function $W$ can be represented in the form%
\[
W=%
%TCIMACRO{\dsum \limits_{j=0}^{3}}%
%BeginExpansion
{\displaystyle\sum\limits_{j=0}^{3}}
%EndExpansion
\varphi_{j}F_{j},
\]
where $\varphi_{j}$ are complex valued functions. It is easy to verify that
the function $W$ is a solution of (\ref{Vekuamain3}) iff%
\begin{equation}%
%TCIMACRO{\dsum \limits_{j=0}^{3}}%
%BeginExpansion
{\displaystyle\sum\limits_{j=0}^{3}}
%EndExpansion
\left(  D\varphi_{j}\right)  F_{j}=0 \label{Vekuamain3second}%
\end{equation}
in a complete analogy with the two-dimensional case. Denote
\[
w=\varphi_{0}+\varphi_{1}i_{1}+\varphi_{2}i_{2}+\varphi_{3}i_{3}.
\]
Then (\ref{Vekuamain3second}) can be written as follows%
\[
D(w+\overline{w})f+D(w-\overline{w})\frac{1}{f}=0
\]
which is equivalent to the equation%
\[
Dw=\frac{1-f^{2}}{1+f^{2}}D\overline{w}.
\]

\end{remark}

\begin{remark}
The results of this section remain valid in the $n$-dimensional situation if
instead of quaternions the Clifford algebra $Cl_{0,n}$ (see, e.g., \cite{BDS},
\cite{GS2}) is considered. The operator $D$ is then introduced as follows
$D=\sum_{j=1}^{n}e_{j}\frac{\partial}{\partial x_{j}}$ where $e_{j}$ are the
basic basis elements of the Clifford algebra.
\end{remark}

\begin{acknowledgement}
The authors wish to express their gratitude to CONACYT for supporting this
work via the research project 50424.
\end{acknowledgement}

\end{document}